\newcommand{\bydef}{\triangleq}
\newcommand{\tr}{{\it{tr}}}
\def\bydef{:=}
\def\bb0{{\mathbb{0}}}
\def\bydef{:=}
\def\ba{{\mathbf{a}}}
\def\bb{{\mathbf{b}}}
\def\bg{{\mathbf{g}}}
\def\bh{{\mathbf{h}}}
\def\bn{{\mathbf{n}}}
\def\br{{\mathbf{r}}}
\def\bs{{\mathbf{s}}}
\def\bt{{\mathbf{t}}}
\def\bv{{\mathbf{v}}}
\def\bw{{\mathbf{w}}}
\def\bx{{\mathbf{x}}}
\def\by{{\mathbf{y}}}
\def\bz{{\mathbf{z}}}
\def\b0{{\mathbf{0}}}
\def\bA{{\mathbf{A}}}
\def\bB{{\mathbf{B}}}
\def\bD{{\mathbf{D}}}
\def\bG{{\mathbf{G}}}
\def\bH{{\mathbf{H}}}
\def\bM{{\mathbf{M}}}
\def\bR{{\mathbf{R}}}
\def\bS{{\mathbf{S}}}
\def\bW{{\mathbf{W}}}
\def\bY{{\mathbf{Y}}}
\def\bZ{{\mathbf{Z}}}
\def\bbC{{\mathbb{C}}}
\def\bbE{{\mathbb{E}}}
\def\bbR{{\mathbb{R}}}
\def\cC{\mathcal{C}}
\def\sfH{\mathsf{H}}
\def\bydef{:=}
\def\sf0{{\mathsf{0}}}
\def\Nt{{N_t}}
\begin{document}
\newtheorem{thm}{Theorem}
\newtheorem{lemma}{Lemma}
\newtheorem{rem}{Remark}
\newtheorem{exm}{Example}
\newtheorem{prop}{Proposition}
\newtheorem{defn}{Definition}
\def\proof{\noindent\hspace{0em}{\itshape Proof: }}
\def\endproof{\hspace*{\fill}~\QED\par\endtrivlist\unskip}
\def\bh{{\mathbf{h}}}
\def\SNR{{\mathsf{SNR}}}
\newcommand{\expeq}{\stackrel{.}{=}}
\newcommand{\expg}{\stackrel{.}{\ge}}
\newcommand{\expl}{\stackrel{.}{\le}}
\title{
Cascaded Orthogonal Space-Time Block Codes for Wireless Multi-Hop Relay Networks
}
\author{Rahul~Vaze and Robert W. Heath Jr. \\
The University of Texas at Austin \\
Department of Electrical and Computer Engineering \\
Wireless Networking and Communications Group \\
1 University Station C0803\\
Austin, TX 78712-0240\\
email: vaze@ece.utexas.edu, rheath@ece.utexas.edu
\thanks{This work was funded in part by Samsung Electronics
and DARPA through IT-MANET grant no. W911NF-07-1-0028.}}

\date{}
\maketitle
\noindent
\begin{abstract}
Distributed space-time block coding is a diversity technique to mitigate
the effects of fading in multi-hop wireless networks, where
multiple relay stages are used by a
source to communicate with its destination.
This paper proposes a new distributed space-time block code called the
cascaded orthogonal space-time block code (COSTBC)
 for the case where the source and destination are equipped with
multiple antennas and each relay stage has one or more single
antenna relays. Each relay stage is assumed to have receive channel
state information (CSI) for all the channels from the source to
itself, while the destination is assumed to have receive CSI for all
the channels. To construct the COSTBC, multiple orthogonal space-time
block codes are used in cascade by the source and each relay stage.
In the COSTBC, each relay stage separates the constellation symbols of
the orthogonal space-time block code sent by the preceding relay
stage using its CSI, and then transmits another orthogonal
space-time block code to the next relay stage. COSTBCs are shown to
achieve the maximum diversity gain in a multi-hop wireless network
with flat Rayleigh fading channels. Several explicit constructions
of COSTBCs are also provided for two-hop wireless networks with two
and four source antennas and relay nodes. It is also shown that
COSTBCs require minimum decoding complexity thanks to the connection
to orthogonal space-time block codes.

\end{abstract}

\section{Introduction}
It is well known that for point-to-point multiple antenna wireless channels,
space-time block codes (STBCs) \cite{Alamouti1998, Tarokh1999a}
improve the bit error rate performance
by introducing redundancy across multiple antennas and time.
Through special designs, STBCs increase the diversity gain, defined
as the negative of the exponent of the signal-to-noise ratio (SNR)
in the pairwise error probability expression at high SNR \cite{Tarokh1999a}.
Recently, the concept of STBC has been extended to wireless networks,
where
the antennas of other nodes in the network (called relays) are used to
construct STBC in a distributed manner to improve the diversity gain
between a particular source and its destination
\cite{Laneman2003,Laneman2004,Jing2004d,Nabar2004,Jing2006a,Belfiore2007,SRajan2006,SRajan2006a}.

%

Prior work on DSTBC
\cite{Laneman2003,Jing2004d,Nabar2004,Jing2006a,Belfiore2007}
considers a two-hop wireless network, where in the first hop the
source transmits the signal to all the relays and in the next hop,
all relays simultaneously transmit a function of the received signal
to the destination.
If a decode and forward (DF) strategy is used, each relay decodes the
incoming signal from the source and then transmits a vector or a matrix
depending on whether it has one or more than one antenna
\cite{Laneman2004, Laneman2003,Damen2007}.
The matrix obtained by stacking all the vectors or matrices
transmitted by the relays is called a DSTBC.
Since each relay decodes the received signal,
the criteria for designing a DSTBC with DF is same as the
 criteria for designing STBCs in point-to-point channels
\cite{Tarokh1999a}.
Due to independent decoding at each relay, however, the diversity gain of DSTBC with DF is limited by the minimum of the diversity gains between the source and all the different relays.

If an amplify and forward (AF) strategy is used,
each relay is only allowed to transmit
a function of the received signal without any decoding, subject to its
power constraint. A DSTBC design is proposed in
\cite{Jing2004d,Jing2006a} using an AF strategy,
where each relay transmits a relay specific unitary transformation of the
received signal. This DSTBC construction, however, is limited to two-hop
wireless networks where each relay is equipped with a single antenna.
It was shown in \cite{Jing2004d,Jing2006a}, that to maximize the diversity
gain, the DSTBC transmitted by all relays using a
unitary transformation should be a full-rank STBC.
Algebraic constructions of maximum diversity gain achieving DSTBC for the
two-hop wireless network are provided in \cite{Oggier2006k,Kiran2006a,EliaDec72005,Jing2007}.

%

Recently, there has been growing interest in multi-hop wireless networks,
where more than two hops are required for a source signal to reach
its destination. Consequently, there is a strong case to construct DSTBCs
that can achieve maximum diversity gain in a large wireless networks
with multiple hops.
Unfortunately, most prior work on constructing DSTBC for
maximizing the diversity gain only considers a two-hop wireless network
\cite{Laneman2004, Laneman2003,Damen2007,Jing2004d,Jing2006a}
and does not readily extends to more than two-hops.

In this paper we design maximum diversity gain
achieving DSTBC's for multi-hop wireless networks.
We assume that the source and the destination terminals have multiple antennas
while the relays in each stage have a single antenna.
We also assume that all the nodes in the network
(source, relays and destination)
can only work in half-duplex mode (cannot transmit and receive at
the same time) and each relay and the destination
has perfect receive channel state information (CSI).

We propose an AF based multi-hop DSTBC, called the cascaded orthogonal
space-time block code (COSTBC), where an orthogonal space-time code
(OSTBC) \cite{Tarokh1999} is used by the source and each relay stage
to communicate with its adjacent relay stage. OSTBCs are considered
because of their single symbol decodable property
\cite{Alamouti1998, Tarokh1999}, i.e. each
constellation symbol of the OSTBC can be separated at the receiver
with independent noise terms. To construct COSTBCs 
the single symbol decodable property
of OSTBC is used by each relay stage to separate the constellation
symbols of the OSTBC transmitted by the preceding stage and transmit
another OSTBC to the next relay stage.

With our proposed COSTBC design, in the first time slot the
source transmits an OSTBC to the first relay stage. Using the single
symbol decodable property of the OSTBC, each relay of the first
relay stage separates the different OSTBC constellation symbols from
the received signal and transmits a codeword vector in the next time
slot, such that the matrix obtained by stacking all the codeword
vectors transmitted by the different relays of the first relay stage
is an OSTBC. These operations are repeated by subsequent relay
stages. It is worth noting that with COSTBC, no signal is decoded
at any of the relays, therefore COSTBC construction
with single antenna relays is equivalent to COSTBC
construction with multiple antenna relays.
Thus without loss of generality in this paper we only consider
COSTBC construction for single antenna relays.
The diversity gain analysis presented in this
paper for COSTBC, however, is very general and applies to the multiple
antenna relay case as well.

We prove that the COSTBCs achieve the maximum diversity gain in two or more
hop wireless networks when CSI is available at each relay and the
destination in the receive mode.
We first show this for a two-hop wireless network
and then using mathematical induction generalize it to the multi-hop case.
We also give an explicit construction of COSTBCs for different
source antennas and relay configurations.
We prove that the COSTBCs have the single symbol decodable
property similar to OSTBCs. We also show that cascading multiple OSTBCs to
construct COSTBC preserves the single symbol decodable property of OSTBCs
and as a result COSTBCs require minimum decoding complexity.

During the preparation of this manuscript we came across three related papers
on DSTBC construction for multi-hop wireless networks
\cite{Oggier2007b,Yang2007a,Sreeram2008}
\footnote{A conference version of our paper was presented in ITA San Diego, Jan. 2008 together with \cite{Sreeram2008}.}.
We briefly review this work and
compare them with the proposed COSTBCs.

Maximum diversity gain achieving DSTBCs are constructed in \cite{Oggier2007b}
for single antenna multi-hop wireless network,
where each node (the source, each relay and the destination)
has single antenna, by extending the AF strategy with unitary transformation
for two-hop wireless networks \cite{Jing2004d}.
It can be shown, however, that the AF strategy with unitary transformation
to construct DSTBC does not extend easily to multi-hop wireless networks
with multiple source or destination antennas. Thus, COSTBC is a more
general solution than the one proposed in \cite{Oggier2007b}.
Moreover, to achieve the maximum diversity gain with the strategy proposed in
\cite{Oggier2007b}, the coding block length,
the time across which coding needs to be done,
is proportional to the product of the number of relay nodes, whereas with
COSTBC it is proportional to the number of relay nodes.
This makes COSTBC more suited for low-latency applications, e.g. voice communication.

The focus of \cite{Yang2007a,Sreeram2008} is on the construction
of DSTBCs that can achieve the optimal diversity multiplexing (DM) tradeoff
\cite{Zheng2003} in a multi-hop wireless network.
In \cite{Yang2007a} a full-duplex multi-hop wireless network
(each node can transmit and
receive at the same time) is considered, whereas
\cite{Sreeram2008} mainly considers a half duplex multi-hop wireless network.
In \cite{Yang2007a} a parallel AF strategy is proposed
which divides the total number of paths from the source to the destination
into non-overlapping groups and transmits an STBC with
non-vanishing determinant property \cite{elia2006c} through each group simultaneously.
It is shown that this strategy achieves the maximum diversity gain and
maximum multiplexing gain points of the optimal DM-tradeoff
in a multi-hop wireless network for some special cases.
An AF strategy similar to delay diversity strategy of \cite{Tarokh1999a}
is proposed in \cite{Sreeram2008} to
achieve the DM-tradeoff for
the half-duplex multi-hop wireless
network where both the source and the destination are equipped with single antenna. In comparison to the strategies of \cite{Yang2007a,Sreeram2008},
COSTBC only achieves the maximum diversity gain
and not the maximum multiplexing gain. Due to the use
of OSTBCs, however, the decoding complexity of COSTBC
is significantly less than the strategies of \cite{Yang2007a,Sreeram2008}
where STBCs with high decoding complexity are used.
Thus COSTBCs are more suited for practical implementation
than the strategies of \cite{Yang2007a, Sreeram2008}.

{\it Notation:}
Let ${\bA}$ denote a matrix, ${\bf a}$ a vector and $a_i$ the
$i^{th}$ element of ${\bf a}$. The $i^{th}$ eigenvalue of $\bA$
is denoted by $\lambda_i(\bA)$ and the maximum and minimum eigenvalue of
$\bA$ by $\lambda_{max}(\bA)$ and $\lambda_{min}(\bA)$, respectively,
if the eigenvalues of $\bA$ are real.
The determinant and trace of matrix  ${\bf A}$ are denoted by
$\det({\bA})$ and $\tr({\bA})$, while $\bA^{\frac{1}{2}}$ denotes
the element wise square root of matrix ${\bf A}$ with all non-negative entries.
The field of real and complex numbers is denoted by $\bbR$ and $\bbC$,
respectively.
The space of $M\times N$ matrices with complex entries is denoted by
${\bbC}^{M\times N}$.
The Euclidean norm of a vector $\bf a$ is denoted by $|\ba|$.
An $m\times m$ identity matrix is denoted by ${\bf I}_m$ and ${\bf 0}_m$ is
as an all zero $m\times m$ matrix.
The superscripts $^T, ^*, ^{\dag}$ represent the transpose, transpose
conjugate and element wise conjugate.
For matrices $\bA,\bB$ by
$\bA \le \bB, \bA, \bB \in {\bbC}^{m \times m}$ we mean
$\bx\bA\bx^{*} \le \bx\bB\bx^{*}, \ \forall \bx\in {\bbC}^{1\times m}$.
The expectation of function $f(x)$ with respect to $x$
is denoted by ${\bbE}_{\{x\}}f(x)$.
The maximum and minimum value of the set
$\{a_1,a_2,\ldots,a_m\}$ where $a_i \in \bbR, \ i=1,2,\ldots,m$
are denoted by $\max\{a_1,a_2,\ldots,a_m\}$ and $\min\{a_1,a_2,\ldots,a_m\}$.
A circularly symmetric complex Gaussian random variable $x$ with zero mean and
variance $\sigma^2$
is denoted as $x \sim {\cal CN}(0,\sigma)$.
We use the symbol $\expeq$ to represent exponential equality i.e.,
let $f(x)$ be a
function of $x$, then  $f(x) \expeq x^a$ if $\lim_{x\rightarrow \infty}\frac{\log(f(x))}{\log x} = a$ and similarly $\expl$ and $\expg$ denote the exponential
less than or equal to and greater than or equal to relation, respectively.
To define a variable we use the symbol $\bydef$.

{\it Organization:} The rest of the paper is organized as follows.
In Section \ref{sec:sys}, we describe the system model for the multi-hop
wireless network and review the key assumptions.
In Section \ref{sec:costbc}, COSTBC construction
is described.
A diversity gain analysis of COSTBCs is presented in Section \ref{sec:2hop}
for the $2$-hop case and generalized to $N$-hop network case in
Section \ref{sec:multihop}.
In Section \ref{sec:code}, explicit constructions
of COSTBCs are provided which achieve maximum diversity gain
for different number of source antenna and relay node configurations.
Some numerical results are provided
in Section \ref{sec:simulation}.
Final conclusions are made in Section \ref{sec:conc}.

\section{System Model}
\label{sec:sys}
\begin{figure}
\centering
\includegraphics[width=6.5in]{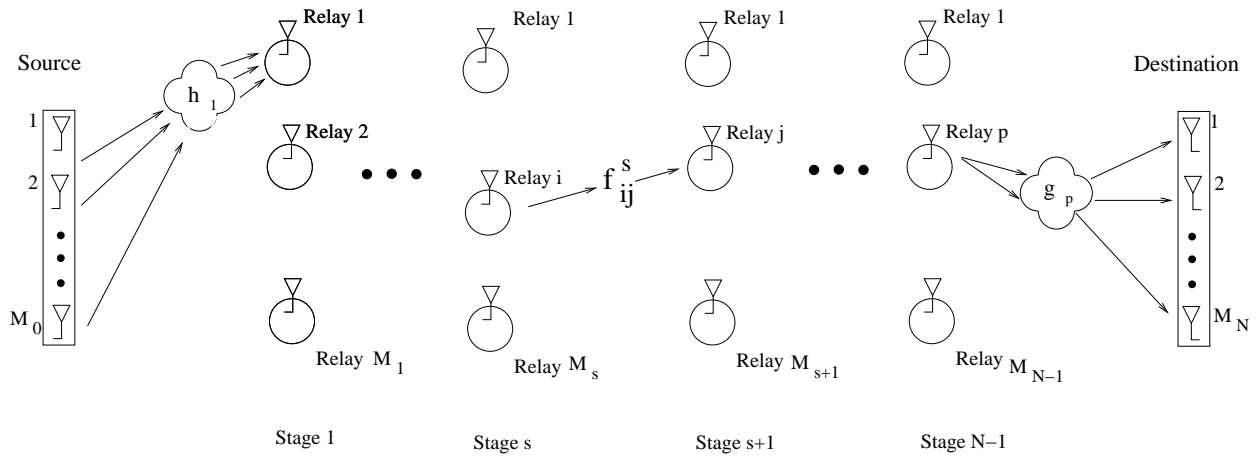}
\caption{System Block Diagram of a N-hop Wireless Network}
\label{blkdiag}
\end{figure}

We consider a multi-hop wireless network where a source terminal
with $M_0$ antennas wants to communicate with a destination terminal
with $M_N$ antennas via $N-1$ relay stages as shown in Fig.
\ref{blkdiag}. Each relay in any relay stage has a single antenna;
$M_n$ denotes the number of relays in the $n^{th}$ relay stage. It
is assumed that the relays do not generate their own data and only
operate in half-duplex mode. A half-duplex assumption is made since
full-duplex nodes are difficult to realize in practice. Similar to the model
considered in \cite{Yang2007a}, we assume that
any relay of relay stage $n$ can only receive the signal from any relay of
relay stage $n-1$, i.e. we consider a directed multi-hop wireless network.
In a practical system this assumption can be realized by allowing every
third relay stage to be active (transmit or receive) at the same time.
To keep the relay functionality and relaying strategy simple we do not allow relay nodes to
cooperate among themselves. We assume that there is no direct path
between the source and the destination. This is a reasonable
assumption for the case when relay stages are used for coverage
improvement and the signal strength on the direct path is very weak.
Throughout this paper we refer to this multi-hop wireless network
with $N-1$ relay stages as an $N$-hop network.

As shown in Fig. \ref{blkdiag}, the channel between the source and the $i^{th}$
relay of the first stage of relays is denoted by
$ \bh_i = [h_{1i} \ h_{2i} \ \ldots \  h_{M_{0}i}]^T, \ i=1,2,\ldots,M_1$,
between the $j^{th}$
relay of relay stage $s$ and the $k^{th}$ relay of relay stage $s+1$ by $f^{s}_{jk}
,\ s=0,1,\ldots,N-2, \ j=1,2,\ldots,M_s,\ k=1,2,\ldots,M_{s+1}$
and the channel between the relay stage $N-1$
and the $\ell^{th}$ antenna of the destination by
$\bg_{\ell}= [ g_{1\ell} \ g_{2\ell} \ \ldots \  g_{M_{N-1}\ell}] ^T, \ \ell=1,2,\ldots,M_{N}$.
We assume that $\bh_i \in {\mathbb C}^{M_{0}\times 1}$, $f^{s}_{jk}\in {\mathbb C}^{1\times 1}$,
$\bg_l \in {\mathbb C}^{M_{N-1}\times 1}$ with
independent and identically distributed
 (i.i.d.) ${\cal CN}(0,1)$ entries for all $i,j,k,\ell,s$.
We assume that the $m^{th}$ relay of $n^{th}$ stage
knows $\bh_i,  f^s_{jk},\ \forall \ i, \ j , \ k, s=1,2,\ldots,n-2, \ f^{n-1}_{jm} \ \forall j$
and the destination knows $\bh_i, f^{s}_{jk}, \bg_l, \
\forall \ i,j,k,l,s$.
We further assume that all these channels are
frequency flat and block fading, where
the channel coefficients remain constant in a block of time
duration $T_c$ and
change independently from block to block.
We assume that the $T_c$ is at least
$\max\{M_0, M_1, \ldots, M_{N-1}\}$.

\label{sec:sys}
\begin{figure}
\centering
\includegraphics[width= 7in]{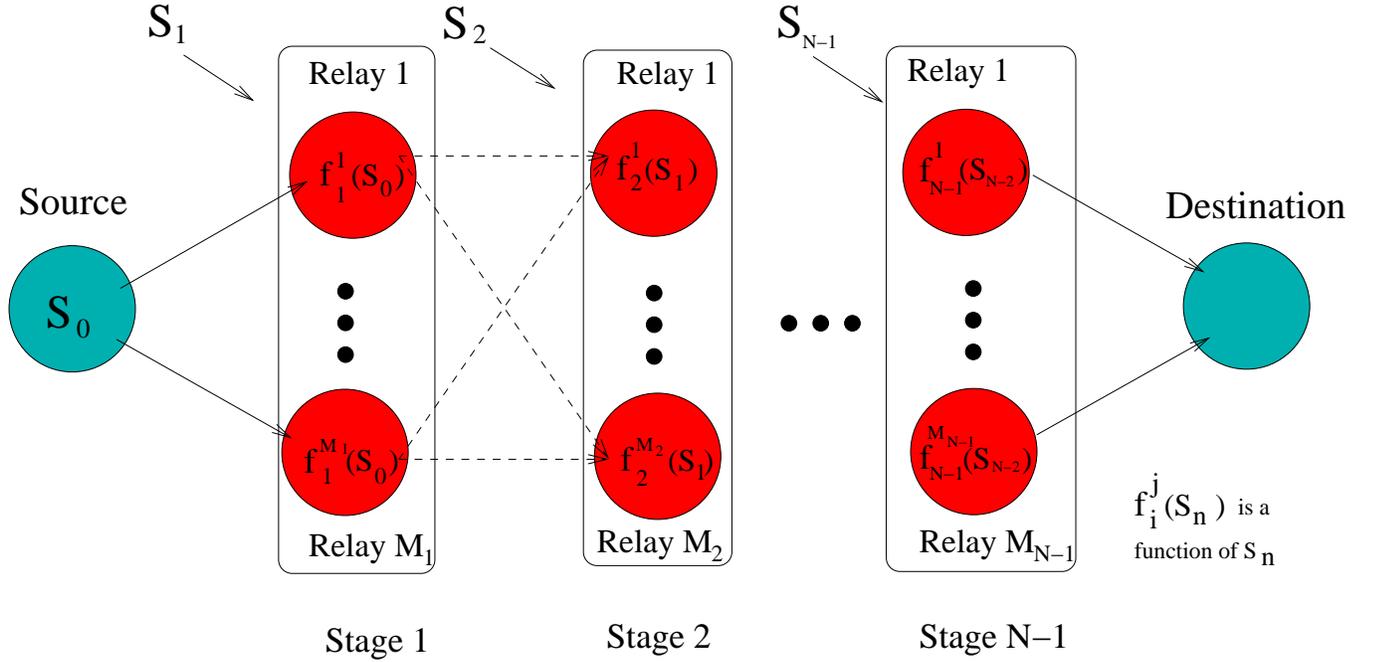}
\caption{An Illustration Of The DSTBC Design Problem}
\label{stcoding}
\end{figure}

\subsection{Problem Formulation}
\begin{defn}(STBC) \cite{Sethuraman2003}
A rate-$L/T$ $T \times \Nt $ {\bf design} ${\bD}$ is a $T \times \Nt $
 matrix with entries that are complex
linear combinations of $L$ complex variables
$s_1, s_2, \ldots, s_L$
and their complex conjugates.
A rate-$L/T$ $T \times \Nt $ STBC ${\bf S}$ is a set of $T \times \Nt$ matrices that are
obtained by allowing the $L$ variables $s_1, s_2, \ldots, s_L$
of the rate-$L/T$ $T \times \Nt$ design $\bD$ to take values from a finite
subset ${\bbC}^f$ of the complex field ${\bbC}$. The cardinality of
${\bf S} = |{\bbC}^f|^{L}$, where $|{\bbC}^f|$ is the cardinality of ${\cC}$.
We refer to $s_1, s_2, \ldots, s_L$ as the constituent symbols of the
STBC.
\end{defn}
\begin{defn}
A DSTBC ${\cal C}$ for a $N$-hop network is a
collection of codes $\left\{\bS_0, \bS_1, \ldots, \bS_{N-1}\right\}$,
where $\bS_0$ is the STBC transmitted by the source and
$\bS_n = [{\bf f}^1_n({\bS_{n-1}})  \ldots {\bf f}^{M_n}_n({\bS_{n-1}})]$ is
the STBC transmitted by relay stage $n$, where ${\bf f}^j_n({\bS_{n-1}})$ is
the vector transmitted by the $j^{th}$ relay of
stage $n$ which is a function of
${\bS_{n-1}},  \ j=0,\ldots, M_n, n=1,\ldots, N-1$.
An example of a DSTBC is illustrated in Fig. \ref{stcoding}.
\end{defn}
\begin{defn}
The diversity gain \cite{Tarokh1999a, Jing2004d} of a DSTBC ${\cal C}$
is defined as
\[ d_{\cal C} = -\lim_{E\rightarrow \infty}\frac{\log{P_e\left(E\right)}}{\log{E}},
\] $P_e\left(E\right)$ is the pairwise error probability (PEP) using coding
strategy ${\cal C}$, and $E$ is the sum of the transmit power used by each
node in the network.
\end{defn}
The problem we consider is in this paper is to design DSTBCs that
achieve the maximum diversity gain in a $N$-hop network.
To identify the limits on the maximum possible
diversity gain in a $N$-hop network, an upper bound on the diversity gain achievable with any DSTBC is presented next.

\begin{thm}
\label{upbound}
The diversity gain $d_{\cal C}$ of DSTBC ${\cal C}$ for
an $N$-hop network is upper bounded by
\[\min\left\{M_nM_{n+1}\right\} \ \ n=0,1,\ldots,N-1.\]
\end{thm}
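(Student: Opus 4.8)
The plan is to prove the bound by a genie-aided (cut-set) argument: for each hop index $n$ I would lower bound the PEP of $\cC$ by the PEP of a point-to-point MIMO system operating over hop $n$ alone, and then invoke the classical MIMO diversity bound. Fix $n \in \{0,1,\ldots,N-1\}$ and let $\bF_n \in \bbC^{M_n \times M_{n+1}}$ denote the matrix of channel coefficients across hop $n$; its entries are the $M_nM_{n+1}$ i.i.d. ${\cal CN}(0,1)$ coefficients collected in $\bh_i$ (for $n=0$), in $f^n_{jk}$ (for $0<n<N-1$), or in $\bg_\ell$ (for $n=N-1$). First I would introduce a genie that supplies the destination with (i) perfect knowledge of all channel realizations, (ii) a noiseless reproduction of the source message at every node of stage $n$ (i.e., hops $0,\ldots,n-1$ are made ideal), and (iii) the received signals of all nodes of stage $n+1$ together with noiseless hops $n+1,\ldots,N-1$. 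Under this genie the only randomness affecting the destination's decision, apart from the fading, is the additive noise at stage $n+1$, so the effective channel collapses to the single MIMO hop
\[ \bY = \sqrt{\rho}\,\bX\bF_n + \bW, \qquad \rho \expeq E, \]
where $\bX$ is the stage-$n$ transmit codeword and $\bW$ has i.i.d. ${\cal CN}(0,1)$ entries. Since removing noise sources and granting extra side information can only decrease the error probability, the genie-aided PEP $P_e^{(n)}(E)$ satisfies $P_e(E) \expg P_e^{(n)}(E)$, whence
\[ d_{\cC} \le -\lim_{E\to\infty}\frac{\log P_e^{(n)}(E)}{\log E} \bydef d^{(n)}. \]

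Next I would bound $d^{(n)}$ using the rank criterion for MIMO channels. Conditioned on $\bF_n$, the pairwise error probability between two stage-$n$ codewords $\bX$ and $\bX'$ is $Q\!\left(\sqrt{\tfrac{\rho}{2}\,\tr\big((\bX-\bX')\bF_n\bF_n^{*}(\bX-\bX')^{*}\big)}\right)$; averaging over the i.i.d. Rayleigh entries of $\bF_n$ yields $P_e^{(n)}(E) \expeq E^{-rM_{n+1}}$, where $r = \mathrm{rank}(\bX-\bX') \le \min\{T,M_n\} \le M_n$ is the rank of the codeword difference matrix. Note that granting the $M_n$ single-antenna nodes of stage $n$ full cooperation (as the genie implicitly does) only lowers the error probability, so the bound remains valid even though the relays cannot cooperate in the actual network. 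Consequently $d^{(n)} \le M_n M_{n+1}$.

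Finally, since the above holds for every hop index $n$, I would combine the $N$ bounds to obtain $d_{\cC} \le \min_{0\le n\le N-1} M_n M_{n+1}$, which is the claim. I expect the main obstacle to be the careful justification of the first step, namely that the noiseless-conveyance genie both lower bounds the true PEP and collapses the $N$-hop network into exactly the MIMO channel over $\bF_n$ with SNR scaling $\rho \expeq E$; this last point requires that each node's transmit power be a fixed fraction of the total network power $E$, so that the per-hop receive SNR indeed scales as $E$. Once the reduction is in place, the MIMO diversity bound $d^{(n)} \le M_nM_{n+1}$ is the routine rank argument of \cite{Tarokh1999a}.
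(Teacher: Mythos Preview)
Your proposal is correct and follows essentially the same cut-set/genie argument as the paper: idealize all hops except hop $n$ so the network collapses to a point-to-point MIMO link with $M_n$ transmit and $M_{n+1}$ receive antennas, invoke the classical bound $d_n\le M_nM_{n+1}$, and minimize over $n$. The paper's proof is terser (it simply asserts ``clearly $d_{\cal C}\le d_n$'' and that $d_n\le M_nM_{n+1}$), while you spell out the genie construction, the PEP monotonicity, and the SNR-scaling caveat explicitly, but the substance is identical.
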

\begin{proof}
Let $d_{\cal C}$ be the diversity gain of coding strategy ${\cal C}$ for
an $N$-hop network.
Let $d_n$ be the
diversity gain of the best possible DSTBC ${\cal C}^{opt}$
that can be used between relay stage
$n$ and $n+1$ when all the relays
in relay stage $n$ and relay stage $n+1$ are allowed to collaborate,
respectively, and
the source message is known to all the relays of relay stage $n$ without
any error and all the relays of the relay stage $n+1$
can send the received signal to the destination error free. Then, clearly,
$d_{\cal C} \le d_n$.
Since the channel between the relay stage $n$ and $n+1$
is a multiple antenna channel with $M_n$ transmit
and $M_{n+1}$ receive antennas,
$d_n\le M_nM_{n+1}$.
Hence $d_{\cal C}\le M_nM_{n+1}$.
Since this is true for
every $n=0,1,\ldots,N-1$, it follows that $d_{\cal C} \le \min\{M_nM_{n+1}\}, \ n=0,1,\ldots,N-1$.
\end{proof}

Thus, Theorem \ref{upbound} implies that the maximum diversity gain
achievable in a $N$-hop network is equal to the minimum of the maximum
diversity gain achievable between any two relay stages, when all the relays
in each relay stage are allowed to collaborate.
In our system model we do not allow any cooperation between relays,
and hence designing a DSTBC that achieves the diversity gain upper bound without any cooperation is
difficult.

For the case of $2$-hop networks, DSTBCs have been proposed to
achieve the maximum diversity gain \cite{Jing2004d, Jing2006a}.
It is worth noting that designing DSTBCs that achieve the maximum diversity
gain in a
$N$-hop network is a difficult problem. The difficulty is two-fold:
proposing a ``good" DSTBC and analyzing its diversity gain.
In the next section we describe our novel COSTBC construction
and prove that it achieves the maximum diversity gain in a $N$-hop network.
As it will be clear in the next section, using OSTBCs to construct COSTBC
simplifies the diversity gain analysis, significantly.

\section{Cascaded Orthogonal Space-Time Code}
\label{sec:costbc}
In this section we introduce the COSTBC design
for a $N$-hop network.
Before introducing COSTBC we need the following definitions.
\begin{defn}
With $T\ge \Nt$, a rate $L/T$ $T \times \Nt $ STBC ${\bf S}$ is called full-rank or fully-diverse
or is said to achieve maximum diversity gain
if the difference of any two matrices $\bM_1, \bM_2 \in {\bf S}$ is full-rank,
\[\min_{\bM_1\ne \bM_2, \ \bM_1,\bM_2 \in {\bf S}}rank(\bM_1- \bM_2) =\Nt.\]
\end{defn}

\begin{defn}(OSTBC)
A rate-$L/K$  $K\times K$ STBC ${\bf S}$ is called an
orthogonal space-time block code (OSTBC) if the design $\bD$ from which it is
derived is orthogonal i.e.
${\bD}{\bD}^{*} = (|s_1|^2 + \ldots + |s_L|^2) {\bf I}_K$.
\end{defn}

\begin{defn}
Let ${\bf S}$ be a rate-$L/K$  $K\times K$ STBC. Then, using CSI,
if each of the constituent symbols $s_i, \ i=1,\ldots,L$ of ${\bf S}$ can be 
separated/decoded independently of $s_j$ $\forall i\ne j \ i,j =1,\ldots,L$
with independent noise terms, then
${\bf S}$ is called a single symbol decodable STBC.
\end{defn}
\begin{rem}
OSTBCs are single symbol decodable STBCs \cite{Tarokh1999}.
\end{rem}
With these definitions we are now ready to describe
COSTBC for a $N$-hop network.

COSTBC is a DSTBC where each $\bS_n, \ n=0,1,\ldots,N-1$ is an OSTBC.
Thus, with COSTBC the source transmits a rate-$L/M_0$ $M_0\times M_0$ OSTBC 
$\bS_0$
in time slot of duration $M_0$. How to construct OSTBCs
$\bS_n, n=1,\ldots,N-1$ is detailed in the following.
Let $\bS_0$ be a rate-$L/M_0$ $M_0\times M_0$ OSTBC transmitted by the source
OSTBC ${\bf S}_0 \in \bbC^{M_0\times M_0}$ to all the relays
of relay stage $1$. Then the received signal $\br_k^1 \in \bbC^{M_0 \times 1}$ at relay $k$ of relay stage $1$ can be written as
\begin{equation}
\br_k^1 =  \sqrt{E_0} \bS_0\bh_{k} + \bn_k^1
\end{equation}
where ${\bbE}tr(\bS_0^{*}\bS_0)=
M_0$ and $E_0$ is the power transmitted
by the source at each time instant. The noise $\bn_k^1$ is the $M_0\times 1$ spatio-temporal
white complex Gaussian noise independent across relays with $\bbE\bn_k^1\bn_k^{1*} = {\bf I}_{M_0}$.
Since ${\bf S}_0$ is an OSTBC, using CSI, the received signal $\br_k^1$
can be transformed into ${\tilde \br}^1_k\in \bbC^{L\times1}$, where
\begin{equation}
\label{ssd}
{\tilde \br}_k^1 =  \sqrt{E_0}
\underbrace{\left[\begin{array}{cccc}
\sum_{m=1}^{M_0}|h_{mk}|^2 &  0 & 0 \\
0 & \ddots & 0 \\
0 &  0&  \sum_{m=1}^{M_0}|h_{mk}|^2\\
\end{array}\right]}_{\sfH}\bs + {\tilde \bn}_k^1
\end{equation}
and $\bs = [s_1, s_2, \ \ldots, \ s_L]^T$ is the vector of the constituent
symbols of the
OSTBC $\bS_0$, ${\sfH }$ is an $L \times L$ matrix and
${\tilde \bn}_k^1$ is an $L \times 1$ vector with entries that are uncorrelated
and ${\cal CN}(0, M_0)$ distributed.
This property is illustrated in the Appendix
\ref{ssdcascala} for the case of the Alamouti code
\cite{Alamouti1998} which is an OSTBC for $M_0=2$. Then we normalize
${\tilde r}_k^1$ by $\sfH^{-\frac{1}{2}}$ to obtain ${\hat r}_k^1$, where
\begin{eqnarray}
\label{scaledssd}\nonumber
{\hat \br}_k^1 &\bydef& \sfH^{-\frac{1}{2}}{\tilde \br}_k^1 \\
 & = & \sqrt{E_0}
\sfH^{\frac{1}{2}}\bs + \underbrace{\sfH^{-\frac{1}{2}}{\tilde \bn}_k^1}_{{\hat \bn}_k^1},
\end{eqnarray}
where ${\hat \bn}_k^1$ is an $L \times 1$ vector with entries
that are uncorrelated and ${\cal CN}(0, 1)$ distributed.

Then, in the second time slot of duration $M_1$, relay $k$ of relay stage $1$
transmits $\bt_k^1$, constructed from the signal (\ref{scaledssd})
\begin{equation}
\label{relaytx}
\bt_k^1 = \sqrt{\frac{E_1M_1}{L\gamma}}\left(\bA_k\hat{\br}_k^1 + \bB_k\hat{\br}^{1\dag}_k\right),
\end{equation}
where $\gamma = \bbE\hat{\br}_k^{1*}\hat{\br}_k^1$ to ensure that the
average power transmitted by each relay at any time instant is $E_1$, i.e.
\[\bbE\left(\bt_k^{1}\right)^{\dag}\left(\bt_k^1\right) = E_1\]
and $\bA_k$, $\bB_k$ are $M_1\times L$ matrices such that
\begin{eqnarray}
\label{hurrad}
\nonumber
\bA_k^{*}\bB_k & =& - \bB_k^{*}\bA_k  \ \ \text{and}\\
\tr{\left(\bA_k^{*}(l)\bA_k(l) +\bB_k^{*}(l)\bB_k(l)\right)} &=& 1  \ \ \forall \ k=1,2\ldots,M_1, \ l=1,2,\ldots L,
\end{eqnarray}
where $\bA_k(l)$ and $\bB_k(l)$ denote the $l^{th}$ column of
$\bA_k$ and $\bB_k$, respectively and
\[\bS_1 \bydef
[\bA_1\bs +\bB_{1}\bs^{\dag} \ldots  \bA_{M_1}\bs+ \bB_{M_1}\bs^{\dag}]\] is
an OSTBC.

Under these assumptions, the $M_1\times 1$ received signal at the
$i^{th}$ relay of relay stage $2$ is
\begin{eqnarray*}
\label{recdestant}
\by_i &=& \sum_{k=1}^{M_1}\bt_kg_{ki} + \bz_i \\
&=&\sqrt{\frac{E_0E_1M_1}{L\gamma}}
\underbrace{[\bA_1\bs+\bB_{1}\bs^{\dag} \ \bA_2\bs+\bB_{2}\bs^{\dag} \  \ldots \
\bA_{M_1}\bs+\bB_{M_1}\bs^{\dag} ]}_{\bS_1} {\hat \bH}^{\frac{1}{2}}\bg_i \\ & &  +
\sqrt{\frac{E_1M_1}{L\gamma}}[\bA_1\hat{\bn}^1_1 + \bB_{1}\hat{\bn}^{1\dag}_{1} \
 \ \ldots \
\bA_{M_1}\hat{\bn}^1_{M_1} + \bB_{M_1}\hat{\bn}^{1\dag}_{M_1} ]\bg_i + \bz_i
\end{eqnarray*}
for $i = 1,2, \ldots M_2$, where
$\bz_i$ is the $M_1\times 1$
spatio-temporal white complex Gaussian noise independent across
$M_2$ receive antennas with i.i.d. ${\cal CN}(0, 1)$ entries and
\[{\hat \bH}^{\frac{1}{2}}=
\left[\begin{array}{cccc}
\sqrt{\sum_{m=1}^{M_0}|h_{m1}|^2} & 0 & 0 & 0 \\
0 & \sqrt{\sum_{m=1}^{M_0}|h_{m2}|^2}  & 0 & 0 \\
0 & 0 & \ddots & 0 \\
0 & 0 & 0&  \sqrt{\sum_{m=1}^{M_0}|h_{mM_1}|^2}\\
\end{array}\right].\]

Thus, an OSTBC $\bS_1$ is transmitted by relay stage $1$ to the relay stage 
$2$ in a distributed manner.
To construct the COSTBC, the strategy of transmitting an OSTBC
from relay stage $1$ is repeated at each relay stage,
i.e. each relay of relay stage $n$ transforms the
received signal as in (\ref{scaledssd}) for the OSTBC transmitted from the
relay stage $n-1$ and transmits an OSTBC in time duration $M_n$ using
$A_k,B_k, k=1,\ldots, M_n$ together with all the other relays in relay stage
$n$ to the relay
stage $n+1$. The power used up at each relay of relay stage $n$ is
$E_n$ such that $E_0+\sum_{n=1}^{N-1} M_nE_n = E$, where $E$ is the
total power available in the network. In the $N^{th}$ time slot of
duration $M_{N-1}$ the receiver receives an OSTBC from relay stage
$N-1$.

The properties of the COSTBC are summarized in the next two Theorems.
\begin{thm}
COSTBCs achieve the maximum diversity
gain in a $N$-hop network given by Theorem \ref{upbound}.
\end{thm}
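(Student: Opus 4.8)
The plan is to pair the upper bound of Theorem~\ref{upbound} with a matching lower bound on the achievable diversity gain, proving the latter by induction on the number of hops $N$. The lever throughout is the single-symbol decodable structure that the OSTBC cascade preserves: after the destination applies the same matched, power-renormalizing processing used in (\ref{ssd})--(\ref{scaledssd}) at every stage, each constituent symbol $s_\ell$ is recovered over an effective \emph{scalar} channel $\hat s_\ell=\sqrt{\gamma_{\mathrm{eff}}}\,s_\ell+n_\ell$ with $n_\ell\sim{\cal CN}(0,1)$. Consequently the pairwise error probability obeys a single Chernoff bound $P_e\le\bbE\big[\exp(-\tfrac{\dmin}{4}\gamma_{\mathrm{eff}})\big]$, and since $\gamma_{\mathrm{eff}}$ is $E$ times a product/ratio of channel gains, a routine Laplace argument identifies the diversity gain with the exponent $d$ in $\Pr(\gamma_{\mathrm{eff}}\le 1)\expeq E^{-d}$. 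The whole problem thus collapses to computing the small-value ($E$-)exponent of $\gamma_{\mathrm{eff}}$ and checking it equals $\min\{M_nM_{n+1}\}$.

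For the base case $N=2$ I would start from the destination signal $\by_i$ displayed above. Set $\alpha_k\bydef\sum_{m=1}^{M_0}|h_{mk}|^2$, the first-hop gain at relay $k$, which is Gamma-distributed with $\Pr(\alpha_k\le\epsilon)\expeq\epsilon^{M_0}$, and $\beta_k\bydef\sum_{i=1}^{M_2}|g_{ki}|^2$, the combined second-hop gain with $\Pr(\beta_k\le\epsilon)\expeq\epsilon^{M_2}$. The AF renormalization by $\gamma$ together with the diagonal scaling $\hat\bH^{1/2}$ turns the effective SNR into a sum over relays of harmonic-mean terms, $\gamma_{\mathrm{eff}}\propto E\sum_{k=1}^{M_1}\frac{\alpha_k\beta_k}{\alpha_k+\beta_k+1}$, the ``$+1$'' accounting for the forwarded first-hop noise. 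Each summand is small precisely when $\min(\alpha_k,\beta_k)$ is small, so it has exponent $\min(M_0,M_2)$; as the $M_1$ relay terms are independent and their sum is small only when all of them are, the tails multiply to give $\Pr(\gamma_{\mathrm{eff}}\le 1)\expeq E^{-M_1\min(M_0,M_2)}$. Since $M_1\min(M_0,M_2)=\min\{M_0M_1,M_1M_2\}$, this meets Theorem~\ref{upbound} for $N=2$.

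For the inductive step I would assume the claim for every network with fewer than $N$ hops and regard stages $1,\dots,N$ as an $(N-1)$-hop network whose ``source'' is relay stage $1$, an $M_1$-antenna distributed array transmitting the OSTBC $\bS_1$. By the induction hypothesis this subnetwork achieves diversity $\min\{M_nM_{n+1}:n=1,\dots,N-1\}$. The first hop enters the subnetwork only through the diagonal pre-weighting $\hat\bH^{1/2}$ --- transmit antenna $k$ is scaled by $\sqrt{\alpha_k}$, with $\Pr(\alpha_k\le\epsilon)\expeq\epsilon^{M_0}$ --- together with its forwarded noise. The goal is then to show that the front hop and the subnetwork combine harmonically, so that $\gamma_{\mathrm{eff}}$ is small only when either the front hop or the subnetwork is in outage; the exponents then add in the min, yielding $\min\{M_0M_1,\ \min\{M_nM_{n+1}:n\ge1\}\}=\min\{M_nM_{n+1}:n=0,\dots,N-1\}$, which closes the induction against the upper bound.

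The step I expect to be the real obstacle is the inductive combination, because the front-hop gains $\{\alpha_k\}$ and the subnetwork \emph{share} relay stage $1$: the same $M_1$ relays that receive hop~$1$ also form the distributed transmit array for the remaining hops, so ``front hop'' and ``subnetwork'' are not independent and cannot simply be treated as two cascaded scalar links. A faithful argument must track \emph{which} relays are deeply faded: when a subset of the $\alpha_k$ collapse, relay stage $1$ effectively loses transmit antennas, degrading the subnetwork's order, and one must verify that averaging over all such partial-fading patterns still produces exactly the exponent $\min\{M_nM_{n+1}\}$ rather than something smaller. Concretely this means the induction hypothesis has to be \emph{strengthened} to cover a distributed source with random per-antenna powers $\{\alpha_k\}$, and one has to bound the accumulated, channel-dependent noise amplification (the chain of $\gamma^{-1/2}$ and $\hat\bH^{\pm 1/2}$ factors) so that no faded intermediate hop inflates a denominator and erodes the exponent below the bottleneck. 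Carrying this out, and matching the resulting lower bound against Theorem~\ref{upbound}, is what makes the two-sided estimate tight.
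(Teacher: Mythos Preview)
Your overall architecture---pair Theorem~\ref{upbound} with an inductive lower bound, scalarize via the single-symbol-decodable property, and read off the diversity order as an outage exponent---is exactly the paper's. The base case is also essentially the paper's second proof of Theorem~\ref{2hopdiv}, though there the analysis works directly with the product form $\sum_{j,k}|g_{kj}|^2\alpha_k$ rather than your harmonic-mean surrogate: the normalizations in (\ref{scaledssd}) keep the forwarded-noise variance bounded, so the $\alpha_k+\beta_k+1$ denominator you introduce does not affect the exponent and is not needed.

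The real divergence is the induction step, and it is precisely the obstacle you flag. You peel off the \emph{first} hop and invoke the hypothesis on stages $1,\dots,N$; this forces you to confront the dependence between $\{\alpha_k\}$ and the subnetwork that reuses relay stage~$1$, and---as you say---would require a strengthened hypothesis with random per-antenna source powers that you do not supply. The paper sidesteps the whole issue by peeling off the \emph{last} hop instead. Using the SSD property it writes the $k$-hop received signal per destination antenna as $r_\ell=\sqrt{\theta E}\sum_{i=1}^{M_k}c_i\,s_\ell+z_\ell$; adding one more hop turns each scalar gain into $q_i=c_i\sum_{j=1}^{M_{k+1}}|g_{ij}|^2$, where the new factors $\{g_{ij}\}$ are \emph{fresh} randomness, independent of everything carried by the $c_i$. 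A conditional split
\[
P\Bigl(c_i\,\textstyle\sum_j|g_{ij}|^2\le\epsilon\Bigr)\ \le\ P\Bigl(\textstyle\sum_j|g_{ij}|^2\le\epsilon\Bigr)\ +\ \int_{\epsilon}^{\infty}P\bigl(c_i\le\epsilon/z\bigr)\,f_Z(z)\,dz
\]
then bounds the first term by the Gamma tail ($\expeq\epsilon^{M_{k+1}}$) and the second via the induction hypothesis applied to $c_i$ (the $k$-hop network with a \emph{single} destination antenna, diversity $\alpha=\min\{M_{k-1},\min_{n\le k-2}M_nM_{n+1}\}$), giving $P_{out}\expeq\SNR^{-\min\{\alpha,M_{k+1}\}(1-r)}$ for each $q_i$. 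Since the last-hop factors $\sum_j|g_{ij}|^2$ are independent across $i$, the sum $\sum_{i=1}^{M_k}q_i$ then has diversity $\min\{\alpha M_k,\,M_kM_{k+1}\}=\min_{n\le k}\{M_nM_{n+1}\}$, closing the induction.

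So the missing ingredient is not a stronger hypothesis but simply the \emph{direction} of the peel: attach the new hop at the destination end, where the added channel is independent of the inductive scalar gain, and the dependence problem you correctly identified never arises.
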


We prove this Theorem in the next two sections.
We start with the $N=2$ case and show that the COSTBCs achieve the
maximum diversity gain for $2$-hop network in Section \ref{sec:2hop}
and then generalize the
result to an arbitrary $N$-hop network using mathematical induction in Section
\ref{sec:multihop}.
\begin{thm}
COSTBCs are single symbol decodable STBCs.
\end{thm}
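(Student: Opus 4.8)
The plan is to establish single symbol decodability by induction on the number of hops $N$, the crux being the noise analysis at the destination. I would first record the structural fact that every $\bS_n$, $n=0,1,\ldots,N-1$, is an OSTBC: $\bS_0$ is one by assumption, and for $n\ge 1$ the Hurwitz--Radon-type conditions of the form (\ref{hurrad}) imposed on the $\bA_k,\bB_k$ at each stage are exactly those guaranteeing $\bS_n\bS_n^{*}=(\sum_{\ell}|s_\ell|^2)\bI$. By the remark that OSTBCs are single symbol decodable, each $\bS_n$ carries the decoupling structure I will exploit.

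For the base case $N=2$, I would take the destination observations $\by_i$, $i=1,\ldots,M_2$, write the signal term as the OSTBC $\bS_1$ seen through the effective diagonal-scaled channel $\hat\bH^{\frac{1}{2}}\bg_i$, and apply to the stacked $\{\by_i\}$ the OSTBC matched filter for $\bS_1$ built from the CSI available at the destination. Orthogonality of $\bS_1$ then decouples the signal immediately: the matched-filter output for $s_\ell$ is proportional to $s_\ell$ alone, with a gain assembled from $\sum_i \|\hat\bH^{\frac{1}{2}}\bg_i\|^2$-type terms. This settles the ``signal'' half of the definition.

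The main obstacle is the ``independent noise'' half. The destination noise is $\sqrt{E_1M_1/(L\gamma)}\,\bN_1\bg_i+\bz_i$, with $\bN_1=[\bA_1\hat\bn_1^1+\bB_1\hat\bn_1^{1\dag}\ \ldots\ \bA_{M_1}\hat\bn_{M_1}^1+\bB_{M_1}\hat\bn_{M_1}^{1\dag}]$ the amplified-and-forwarded relay noise. I would split the cross-covariance of the $s_\ell$- and $s_{\ell'}$-matched-filter outputs into a local and a forwarded part. The local noise $\bz_i$ is spatio-temporally white, so its projections onto the orthogonal dispersion directions of $s_\ell$ and $s_{\ell'}$ are uncorrelated. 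The forwarded noise is colored and mixes a symbol with its conjugate, so here I would compute the cross-covariance directly, using (i) the independence of $\hat\bn_k^1$ across relays $k$, each with uncorrelated $\mathcal{CN}(0,1)$ entries, and (ii) the relations $\bA_k^{*}\bB_k=-\bB_k^{*}\bA_k$ from (\ref{hurrad}); these are precisely what force the $\ell\ne\ell'$ terms to cancel. Since everything is jointly Gaussian, uncorrelated outputs are independent, completing $N=2$.

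For the induction step I would assume the claim for $N-1$ hops and note that inserting one more stage preserves the structure: stage $N-1$ applies the symbol-separating transformation (\ref{scaledssd}) to the incoming OSTBC and re-transmits a fresh OSTBC $\bS_{N-1}$ in the same symbols $s_1,\ldots,s_L$, while the cumulative noise stays a sum of mutually independent white vectors, each filtered through matrices obeying (\ref{hurrad}). The destination analysis is then identical in form to the base case, and the only point needing care is that the cross-covariance cancellation remains intact under one further layer of $\bA_k,\bB_k$-filtering, which again reduces to (\ref{hurrad}) together with the independence of the noise injected at the new stage.
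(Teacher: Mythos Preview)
Your plan is essentially the paper's: induction on the number of hops, with the base case handled by applying the OSTBC matched filter for $\bS_1$ at the destination and then verifying that the resulting per-symbol noises are uncorrelated (hence independent, by Gaussianity); the paper carries this out explicitly for the cascaded Alamouti code in Appendix~\ref{ssdcascala} and sketches the general case in Appendix~\ref{ssdcostbc}, where the induction step is argued exactly as you describe.

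One point to tighten. The two conditions displayed in (\ref{hurrad}) are \emph{not} by themselves the orthogonality conditions for $\bS_n$; the paper imposes ``$\bS_1$ is an OSTBC'' as a separate design requirement on the $\bA_k,\bB_k$ (it says so right after (\ref{hurrad})), and it is this full OSTBC structure, not merely $\bA_k^{*}\bB_k=-\bB_k^{*}\bA_k$, that drives the cancellation of the $\ell\neq\ell'$ cross-covariances after matched filtering. Concretely, the forwarded-noise covariance at column $k$ is $\bA_k\bA_k^{*}+\bB_k\bB_k^{*}$ (the mixed terms vanish by circular symmetry, which is where (\ref{hurrad}) is not even needed), and for the rate-$3/4$ four-antenna OSTBC in Example~2 this matrix is not proportional to $\bI$, so the forwarded noise is \emph{not} temporally white column-by-column; the decoupling must therefore come from the orthogonality of the dispersion matrices of $\bS_1$ acting jointly across all columns, not from (\ref{hurrad}) alone. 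Once you replace ``(\ref{hurrad})'' by ``the OSTBC property of $\bS_1$'' in your noise argument, the rest of your outline matches the paper.
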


The Theorem is proved in Appendix \ref{ssdcascala} for a special case 
of $M_n=2, \ n=1,2,\ldots,N-1$ and 
in Appendix \ref{ssdcostbc} for the general case.
Recall that
with COSTBCs, OSTBCs are transmitted in cascade by each relay stage, thus
the single symbol decodable property of the COSTBCs implies that by cascading OSTBCs, the
single symbol decodable property
of OSTBC is preserved.
We also make use of the single symbol decodable property of COSTBC to show that it achieves
the maximum diversity gain for $N$-hop networks.

\section{Diversity Gain Analysis of COSTBC For $2$-Hop Network}
\label{sec:2hop}
In this section we prove that the COSTBCs achieve the maximum diversity gain in a
$2$-hop network.
\begin{thm}
\label{2hopdiv}
COSTBCs achieve a diversity gain of $\min\{M_0M_1, \ M_1M_2\}$ in a $2$-hop
network.
\end{thm}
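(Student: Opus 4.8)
The plan is to prove only achievability, since Theorem~\ref{upbound} already supplies $d_{\cal C}\le\min\{M_0M_1,M_1M_2\}$ for the $2$-hop network; what remains is $d_{\cal C}\ge\min\{M_0M_1,M_1M_2\}$. The point of departure is the single symbol decodable property of the cascaded OSTBC: because $\bS_1$ is an OSTBC observed at destination antenna $i$ through the effective channel $\hat\bH^{\frac{1}{2}}\bg_i$, the vector detection problem decouples into $L$ independent scalar problems, one per constituent symbol $s_l$. Hence, up to constants depending only on the constellation, $P_e$ is controlled by a single effective per-symbol SNR $\rho$. I would first derive $\rho$ in closed form by (i) applying the OSTBC matched filter at each of the $M_2$ destination antennas, (ii) computing the covariance of the combined noise $\sqrt{E_1M_1/(L\gamma)}\,[\bA_1\hat\bn^1_1+\bB_1\hat\bn^{1\dag}_1\ \cdots\ \bA_{M_1}\hat\bn^1_{M_1}+\bB_{M_1}\hat\bn^{1\dag}_{M_1}]\bg_i+\bz_i$ using the orthogonality relations~(\ref{hurrad}), and (iii) combining across the destination antennas.

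The second step is a Chernoff bound of the form $P_e\expl\bbE_{\bh,\bg}\,[\exp(-c\,\rho)]$, with $c$ proportional to the minimum squared distance of the constellation, together with an explicit expression for $\rho$ in terms of the link gains $\alpha_k:=\sum_{m=1}^{M_0}|h_{mk}|^2$ and $|g_{ki}|^2$. Tracking the normalizations ($\gamma\expeq E_0$ and $E_0\expeq E_1\expeq E$ under a fixed power split), I expect $\rho$ to take the amplify-and-forward harmonic-mean form in which relay $k$ contributes a numerator carrying the coherent factor $\alpha_k\sum_i|g_{ki}|^2$ and a denominator that mixes the destination noise with the amplified relay noise $\propto\sum_i|g_{ki}|^2/\alpha_k$. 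The qualitative features I would extract are that $\rho\expeq E$ for typical channels, and that $\rho$ collapses toward $0$ precisely when relay $k$'s incoming link is weak ($\alpha_k\to0$) \emph{or} all of its outgoing links are weak ($g_{ki}\to0\ \forall i$).

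The decisive step is the high-SNR evaluation of $\bbE_{\bh,\bg}[\exp(-c\rho)]$, which reduces to estimating the outage probability $\Pr\{\rho\expl 1\}$. Writing $\alpha_k\expeq E^{-u_k}$ and $|g_{ki}|^2\expeq E^{-v_{ki}}$ and using $\Pr\{\alpha_k\expeq E^{-u_k}\}\expeq E^{-M_0u_k}$ (Gamma$(M_0,1)$) and $\Pr\{|g_{ki}|^2\expeq E^{-v_{ki}}\}\expeq E^{-v_{ki}}$ (exponential), the condition $\rho\expl1$ forces $u_k+v_{ki}\ge1$ for all $k,i$. Minimizing the total exponent $\sum_k M_0u_k+\sum_{k,i}v_{ki}$ over this polytope is a small linear program that separates across relays: for each $k$ the optimum is $v_{ki}=\max(0,1-u_k)$, giving cost $M_0u_k+M_2\max(0,1-u_k)$, which is minimized to $\min\{M_0,M_2\}$ (at $u_k=1$ if $M_0<M_2$, at $u_k=0$ otherwise). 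Since the $M_1$ relays fade independently, the dominant outage exponent is $M_1\min\{M_0,M_2\}=\min\{M_0M_1,M_1M_2\}$, hence $d_{\cal C}\ge\min\{M_0M_1,M_1M_2\}$, and combined with Theorem~\ref{upbound} we obtain equality.

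I expect the main obstacle to be the second step: the noise reaching the destination is neither white nor independent across the $M_2$ antennas, because the same relay noise $\hat\bn^1_k$ is broadcast to every destination antenna through the gains $g_{ki}$. Showing that after OSTBC combining the cross terms cancel by virtue of~(\ref{hurrad}) and circular symmetry, so that the effective noise covariance is a scalar multiple of the identity and a genuinely scalar $\rho$ emerges with the claimed dependence on $\alpha_k$ and $|g_{ki}|^2$, is the delicate part. Once $\rho$ is pinned down, the outage computation of the third step is a routine, if careful, Laplace-type estimate of products of Gamma and exponential tails near the origin.
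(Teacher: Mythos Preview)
Your plan is correct and closely mirrors the paper's \emph{second} proof of the theorem, which also invokes the single symbol decodable property (Appendix~\ref{ssdcostbc}) to reduce to a scalar channel and then bounds the outage exponent. Two differences are worth noting. First, the paper reaches the exponent $M_1\min\{M_0,M_2\}$ by a crude monotonicity argument---dropping all but $\min\{M_0,M_2\}M_1$ product terms $|g_{kj}|^2|h_{jk}|^2$ and then bounding the sum by the maximum of i.i.d.\ terms---whereas you set up the Zheng--Tse change of variables and solve the linear program $\min\sum_k M_0u_k+\sum_{k,i}v_{ki}$ subject to $u_k+v_{ki}\ge1$. Your route is more systematic and gives the same answer; the paper's is shorter. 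Second, the paper also offers a \emph{first} proof that works directly with the PEP and handles the correlated destination noise by the blunt bound $\bR_{\bW}\le\bigl(1+\tfrac{E_1M_1^2}{\gamma}\lambda_{\max}(\bG^{*}\bG/M_1)\bigr)\bI_{M_2}$, then averages over the Wishart eigenvalue; this sidesteps the cancellation argument you anticipate as the main obstacle.

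One clarification that will simplify your Step~2: because each relay pre-multiplies by $\sfH^{-1/2}$ in~(\ref{scaledssd}), the forwarded noise $\hat\bn_k^1$ is already ${\cal CN}(0,1)$ \emph{independent of} $\alpha_k=\sum_m|h_{mk}|^2$. Hence the destination noise covariance is $\bR_{\bW}=\tfrac{E_1M_1}{\gamma}\bG^{*}\bG+\bI_{M_2}$ with no $1/\alpha_k$ factor, and the harmonic-mean structure you anticipate does not arise; the effective per-symbol signal power is simply $\sum_{k}\alpha_k\sum_i|g_{ki}|^2$ against noise that depends only on $\bG$. This does not change your outage computation (the LP constraints $u_k+v_{ki}\ge1$ are unaffected), but it means the ``delicate part'' is less delicate than you fear: the only channel dependence in the noise is through $\bG$, and since $\bR_{\bW}\expeq\bG^{*}\bG$ has all eigenvalues $\expeq1$ with probability $\expeq1$, the noise scaling contributes nothing to the diversity exponent.
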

\begin{proof}
Using a COSTBC in a $2$-hop
network, from (\ref{rxreceive}), the received signal at the $i^{th}$ antenna of
destination is
\begin{equation}
\by_i = \sum_{k=1}^{M_1}\bt_kg_{ki} + \bz_i.
\end{equation}

Then the received signal $\bY \bydef [\by_1 \ldots \by_{M_2}]$ at the destination,
received in time slots $M_0+1$ to $M_0+M_1+1$ can be written as
\begin{eqnarray*}
\label{rxreceive}
  \bY & =&  \sqrt{\frac{E_0E_1M_1}{L\gamma}}
\underbrace{[\bA_1\bs+\bB_{1}\bs^{\dag} \ \bA_2\bs+\bB_{2}\bs^{\dag} \  \ldots \
\bA_{M_1}\bs+\bB_{M_1}\bs^{\dag} ]}_{\bS_1} {\hat \bH}\bG \\ & &  +
\underbrace{\sqrt{\frac{E_1M_1}{L\gamma}}[\bA_1\hat{\bn}^1_1 + \bB_{1}\hat{\bn}^{1\dag}_{1} \
 \ \ldots \
\bA_{M_1}\hat{\bn}^1_{M_1} + \bB_{M_1}\hat{\bn}^{1\dag}_{M_1} ]\bG + \bZ}_{\bW}
\end{eqnarray*}
where
\[\bG = [\bg_1 \ldots \bg_{M_2}] = \left(\begin{array}{cccc}
g_{11} & g_{12} & \ldots  & g_{1M_2} \\
g_{21} & g_{22} & \ldots  & g_{2M_2} \\
\vdots  & \vdots & \ldots & \vdots \\
g_{M_11} & g_{M_12} & \ldots  & g_{M_1M_2}\\
\end{array}\right)\]
and the noise $\bZ = [\bz_1 \ \bz_2 \ \ldots \ \bz_{M_2}]$.

Concisely, we can write
\begin{equation}
\label{origchannel}
\bY = \sqrt{\frac{E_0E_1M_1}{L\gamma}}\bS_1{\hat \bH}^{\frac{1}{2}}\bG + \bW.
\end{equation}
With channel coefficients $\bh_k$ and $\bg_k$ known at the receiver
$\forall \ k =1,2,\dots, M_1$, $\bW$ is Gaussian distributed
with an all zero mean vector and entries of $Y$ are Gaussian distributed.
Moreover it can be shown that any two rows of $\bY$ are uncorrelated and hence
independent.

Using the definition of $\bA_k$ and $\bB_k$ and the fact that
$\hat{\bn}^1_k$ is $L\times 1$
vectors with ${\cal CN}\left(0, \sum_{i=1}^{M_0}|h_{im}|^2\right)$
entries $\forall \ k=1,2,\ldots, M_1$, it can be shown that
the covariance matrix $\bR_{\bW}$ of each row of $\bW$ is
\[\bR_{\bW} = \frac{E_1M_1}{\gamma}\bG^{*}\bG + {\bf I}_{M_2}.\]

Defining $\Phi = \hat{\bH}^{\frac{1}{2}}\bG$,
\begin{eqnarray*}
P\left(\bY|{\bS_1}{\hat \bH},\bG\right) &=&
\prod_{t=1}^{M_1}P\left([\bY]_t|{\bS_1}{\hat \bH},\bG\right)\\
& =& \left(\frac{1}
{{(2\pi)}^{M_2}
\det\left({\bR_{\bW}}\right)
}\right)^{\frac{M_1}{2}}
e^{
-tr
\left(
\left[\bY - \sqrt{\frac{E_0E_1M_1}{L\gamma}}{\bS_1}\Phi\right]
\bR_{\bW}^{-1}
\left[\bY - \sqrt{\frac{E_0E_1M_1}{L\gamma}}{\bS_1}\Phi\right]^{*}
\right)}
\end{eqnarray*}
where $P(\bY|{\bS_1},{\hat \bH},\bG)$ is the conditional probability of
$\bY$ given ${\bS_1},\ {\hat \bH}, \ \bG$ and
$P([\bY]_t|{\bS_1},{\hat \bH},\bG)$ is the conditional probability of
$t^{th}$ row of $\bY$ given ${\bS_1},\ {\hat \bH}, \ \bG$.
Assuming  ${\bS_1}_l$ is the transmitted codeword, then for any
$\lambda > 0$, the PEP $P\left({\bS_1}_l \rightarrow {\bS_1}_m\right)$
of decoding a codeword ${\bS_1}_m, \ m\ne l$,
has the Chernoff bound \cite{V.1968}
\[P\left({\bS_1}_l \rightarrow {\bS_1}_m\right) \le {\bbE}_{\{{\hat \bH},\bG,\bW\}}e^{\lambda
\left(\log{P(\bY|{\bS_1}_l,{\hat \bH},\bG)} - \log{P(\bY|{\bS_1}_m,{\hat \bH},\bG)}\right)}.\]
Since ${\bS_1}_l$ is the correct transmitted codeword,
\[\bY = \sqrt{\frac{E_0E_1M_1}{L\gamma}}{\bS_1}_l\Phi + \bW \]
and
\begin{eqnarray*}
\log{P(\bY|{\bS_1}_l,{\hat \bH},\bG)} - \log{P(\bY|{\bS_1}_m,{\hat \bH},\bG)} & = & -\tr\left[
\frac{E_0E_1M_1}{L\gamma}\left({\bS_1}_l - {\bS_1}_m\right)\Phi \bR_{\bW}^{-1}\Phi^{*}\left({\bS_1}_l - {\bS_1}_m\right)\right. \\
& & +   \sqrt{\frac{E_0E_1M_1}{L\gamma}}\left({\bS_1}_l - {\bS_1}_m\right)\Phi \bR_{\bW}^{-1}\bW^{*}\\
& &  +  \left. \sqrt{\frac{E_0E_1M_1}{L\gamma}}\bW \bR_{\bW}^{-1}\Phi^{*}\left({\bS_1}_l - {\bS_1}_m\right)^{*}\right].
\end{eqnarray*}

Therefore,
\begin{eqnarray}
\label{pep1}
&&P\left({\bS_1}_l \rightarrow {\bS_1}_m\right)  \nonumber\\
&&\le {\bbE}_{\{{\hat \bH},\bG\, \bW\}}e^{-\lambda \tr\left(\frac{E_0E_1M_1}{L\gamma}\left({\bS_1}_l - {\bS_1}_m\right)\Phi \bR_{\bW}^{-1}\Phi^{*}\left({\bS_1}_l - {\bS_1}_m\right)
+   \sqrt{\frac{E_0E_1M_1}{L\gamma}}\left({\bS_1}_l - {\bS_1}_m\right)\Phi \bR_{\bW}^{-1}\bW^{*}
 +   \sqrt{\frac{E_0E_1M_1}{L\gamma}}\bW \bR_{\bW}^{-1}\Phi^{*}\left({\bS_1}_l - {\bS_1}_m\right)^{*}\right)} \nonumber\\
&& \le
\bbE_{\{{\hat \bH},\bG\}}
e^{
-\lambda(1-\lambda)\frac{E_0E_1M_1}{L\gamma}
\tr\left(\left({\bS_1}_l - {\bS_1}_m\right)
\Phi \bR_{\bW}^{-1}\Phi^{*}
\left({\bS_1}_l - {\bS_1}_m\right)^{*}
\right)
} \nonumber\\
&&
\int
{
\frac
{
e^{
-\tr
\left(
\left(\lambda \sqrt{\frac{E_0E_1M_1}{L\gamma}}\left({\bS_1}_l - {\bS_1}_m\right)\Phi+\bW\right)
\bR_{\bW}^{-1}
\left(\lambda \sqrt{\frac{E_0E_1M_1}{L\gamma}}\left({\bS_1}_l - {\bS_1}_m\right)\Phi+\bW\right)^{*}
\right)
}
}
{\pi^{M_2T}\det^{-1}(\bR_{\bW})}d\bW}  \nonumber\\
&&\le
{\bbE}_{\{{\hat \bH},\bG\}}
e^{-\lambda(1-\lambda)\frac{E_0E_1M_1}{L\gamma}
\tr\left(\left({\bS_1}_l - {\bS_1}_m\right)\left({\bS_1}_l - {\bS_1}_m\right)^{*}\Phi \bR_{\bW}^{-1}\Phi^{*}\right)}.
\end{eqnarray}
Clearly $\lambda = \frac{1}{2}$ maximizes $\lambda(1-\lambda)$ for
$\lambda > 0$, and
therefore minimizes the above expression and it follows that
\begin{equation}
\label{pep1}
P\left({\bS_1}_l \rightarrow {\bS_1}_m\right)
\le
{\bbE}_{\{{\hat \bH},\bG\}}
e^{-\frac{E_0E_1M_1}{4L\gamma}
tr\left[\left({\bS_1}_l - {\bS_1}_m\right)\left({\bS_1}_l - {\bS_1}_m\right)^{*}\Phi \bR_{\bW}^{-1}\Phi^{*}\right]}.
\end{equation}

The difficulty in evaluating the expectation in (\ref{pep1}) is the fact that the
noise covariance matrix $\bR_{\bW}$ is not diagonal. To simplify the PEP
analysis we use an upper bound on the eigenvalues of $\bR_{\bW}$, derived in the next lemma.
\begin{lemma}
\label{eigsupper}
$\bR_{\bW} \le \left(1 + \frac{E_1M_1^2}{\gamma}\lambda_{max}\left(\frac{\bG^{*}\bG}{M_1}\right)\right){\bf I}_{M_2}$.
\end{lemma}
\begin{proof}
Recall that $\bR_{\bW} = {\bf I}_{M_2} + \frac{E_1M_1}{\gamma}\bG^{*}\bG$.
Thus the eigenvalues $\lambda_i\left(\bR_{\bW}\right) = 1 +
\frac{E_1M_1^2}{\gamma}\lambda_i\left(\frac{\bG^{*}\bG}{M_1}\right), \ \forall \ i=1,2\ldots,M_2$ and clearly
\[\bR_{\bW} \le \left(1 + \frac{E_1M_1^2}{\gamma}\lambda_{max}\left(\frac{\bG^{*}\bG}{M_1}\right)\right){\bf I}_{M_2}.\]
\end{proof}
From here on in this paper we refer to
$\lambda_{max}\left(\frac{\bG^{*}\bG}{M_1}\right)$
as $\lambda_{\bG}$ for notational simplicity.
Using Lemma \ref{eigsupper}, (\ref{pep1}) simplifies to
\begin{equation}
\label{peppowerall}
P\left({\bS}_{1l} \rightarrow {\bS}_{1m}|\lambda_{\bG}=\lambda_0\right)
\le
{\bbE}_{\{{\hat \bH},\bG\}}
e^{-\frac{E_0E_1M_1}{4M_0L\left(\gamma+E_1M_1^2\lambda_0\right)}
tr\left[\left({\bS}_{1l} - {\bS}_{1m}\right)\left({\bS}_{1l} - {\bS}_{1m}\right)^{*}\Phi\Phi^{*}\right]}, 
\end{equation}
where 
\begin{equation*}
P\left({\bS}_{1l} \rightarrow {\bS}_{1m}\right)
= \bbE_{\{\lambda_{\bG}\}}P\left({\bS}_{1l} \rightarrow {\bS}_{1m}|\lambda_{\bG}=\lambda_0\right).
\end{equation*}

Recall that there is a power constraint of $E_0+E_1M_1 =E$. Therefore
to minimize the upper bound on the PEP (\ref{peppowerall}), the term
$\frac{E_0E_1M_1}{4M_0L\left(\gamma+E_1M_1^2\lambda_0\right)}$ should be maximized
over $E_0, E_1$ satisfying the power constraint.
The optimal values of $E_0$ and $E_1$ to maximize
$\frac{E_0E_1M_1}{4M_0L\left(\gamma+E_1M_1^2\lambda_0\right)}$ can be found
explicitly, however, they can complicate the diversity gain analysis.
To simplify the diversity gain analysis of COSTBC, we
consider a particular choice of $E_0 = \frac{E}{2}$ and $E_1 =
\frac{E}{2M_1}$ (half the total power is used by the transmitter and
half is equally distributed among all the relays).
In the following, we show that with this power allocation,
the diversity gain of COSTBC
is equal to the upper bound (Theorem \ref{upbound}) and thus
we do not lose any diversity gain by restricting the calculation
to this particular power allocation. Moreover, this
power allocation also satisfies the power constraint and therefore
provides us with a upper bound on the PEP.
Using this power allocation and the value
of $\gamma = E_0M_0L + L$,
\[\frac{E_0E_1M_1}{4M_0L\left(\gamma+E_1M_1^2\lambda_0\right)} \ge
\frac{E}{8M_1M_0L(\frac{L(M_0+1)}{M_1} + \lambda_0))}\]
for $E>1$, which implies
\begin{equation}
\label{pep4}
P\left({\bS}_{1l} \rightarrow {\bS}_{1m}|\lambda_{\bG}=\lambda_0\right) \le
{\bbE}_{\{{\hat \bH},\bG\}}
e^{-\frac{E}{8M_1M_0L(\mu + \lambda_0)}
\tr\left(\left({\bS}_{1l} - {\bS}_{1m}\right)\left({\bS}_{1l} - {\bS}_{1m}\right)^{*}\Phi\Phi^{*}\right)}
\end{equation}
where $\mu=\frac{L(M_0+1)}{M_1}$.

Recall that $\Phi = {\hat \bH}^{\frac{1}{2}}{\bG}$.
Let $\phi_j$ be the $j^{th}$ column of $\Phi$, then
\begin{eqnarray*}
\tr\left(\left({\bS}_{1l} - {\bS}_{1m}\right)\left({\bS}_{1l} - {\bS}_{1m}\right)^{*}\Phi\Phi^{*}\right) &=&
\tr\left(\Phi^{*}\left({\bS}_{1l} - {\bS}_{1m}\right)^{*}\left({\bS}_{1m} - {\bS}_{1l}\right)\Phi\right)\\
&=& \sum_{j=1}^{M_2}\phi_j^{*}
\left({\bS}_{1l} - {\bS}_{1m}\right)^{*}\left({\bS}_{1l} - {\bS}_{1m}\right)\phi_j.
\end{eqnarray*}
Thus, from (\ref{pep4})
\begin{eqnarray*}
P\left({\bS}_{1l} \rightarrow {\bS}_{1m}|\lambda_{\bG}=\lambda_0\right) &\le
 &
{\bbE}_{\{{\hat \bH}, \bG\}}
e^{-\frac{E}{8M_1M_0L(\mu + \lambda_0)}\sum_{j=1}^{M_2}\phi_j^{*}
\left({\bS}_{1l} - {\bS}_{1m}\right)^{*}\left({\bS}_{1l} - {\bS}_{1m}\right)\phi_j}\\
&\le & {\bbE}_{\{{\hat \bH}, \bG\}}
e^{-\frac{E}{8M_1M_0L(\mu + \lambda_0)}\sum_{j=1}^{M_2}\bg_j^{*}{\hat \bH}^{\frac{1}{2}*}
\left({\bS}_{1l} - {\bS}_{1m}\right)^{*}\left({\bS}_{1l} - {\bS}_{1m}\right){\hat \bH}^{\frac{1}{2}}\bg_j} \\
&\le & {\bbE}_{\{{\hat \bH}, \bG\}}
\prod_{j=1}^{M_2}
e^{-\frac{E}{8M_1M_0L(\mu + \lambda_0)}
\bg_j^{*}{\hat \bH}^{\frac{1}{2}*}
\left({\bS}_{1l} - {\bS}_{1m}\right)^{*}\left({\bS}_{1l} - {\bS}_{1m}\right){\hat \bH}^{\frac{1}{2}}\bg_j}
\end{eqnarray*}
where $\bg_j$ is the $j^{th}$ column of $\bG$. Since $\bg_j$ is a $M_1$ dimensional
Gaussian vector $\forall j=1,2,\ldots,M_2$, it follows that
\[
P\left({\bS}_{1l} \rightarrow {\bS}_{1m}|\lambda_{\bG}=\lambda_0\right) \le
{\bbE}_{\{{\hat \bH}\}}
\left[
\det\left(
{\bf I}_{M_1}+ \frac{E}{8M_1M_0L(\mu + \lambda_0)}{\hat \bH}^{\frac{1}{2}*}
\Delta \bS_{1lm}{\hat \bH}^{\frac{1}{2}}
\right)
\right]^{-M_2}
\] where $\Delta \bS_{1lm} \bydef \left({\bS}_{1l} - {\bS}_{1m}\right)^{*}\left({\bS}_{1l} - {\bS}_{1m}\right)$.
Since ${\bS}_1$ is an OSTBC the minimum singular value
$\sigma_{min}$ of $\Delta \bS_{1lm}$ is $>0$, which implies
\begin{equation}
\label{exph}
P\left({\bS}_{1l} \rightarrow {\bS}_{1m}|\lambda_{\bG}=\lambda_0\right) \le
{\bbE}_{\{{\hat \bH}\}}
\left[
\det\left(
{\bf I}_{M_1}+ \frac{E\sigma_{min}}{8M_1M_0L(\mu + \lambda_0)}
{\hat \bH}\right)
\right]^{-M_2}.
\end{equation}
Now we are left with computing the expectation in (\ref{exph})
with respect to ${\hat \bH}$.
Towards that end, recall that ${\hat \bH}$ is a diagonal matrix with
each entry $\sum_{m=1}^{M_0}|h_{mk}|^2$, which is gamma distributed
with probability density function PDF
$\frac{1}{(M_0-1)!}{x}^{M_0-1}e^{-{x}}$.

Therefore,
\[
P\left({\bS}_{1l} \rightarrow {\bS}_{1m}|\lambda_{\bG}=\lambda_0\right) \le
\frac{1}{(M_0-1)!}
\left[\int_0^{\infty}\left(1+ \frac{E\sigma_{min}}{8M_1M_0L(\mu + \lambda_0)}x\right)^{-M_2}x^{M_0-1}e^{-x}
dx\right]^{M_1}.
\]

Using an integration result from Theorem 3 \cite{Jing2004d}, it follows that
\begin{eqnarray}
\label{divcond}
\nonumber
P\left({\bS}_{1l} \rightarrow {\bS}_{1m}|\lambda_{\bG}=\lambda_0\right) & \le&\frac{1}{(M_0-1)!}
\left(\frac{8M_1M_0L(\mu + \lambda_0)}{\sigma_{min}}\right)^{\min{\{M_0,M_2\}}M_1
}\\
&&\times\left\{\begin{array}{cc}
\frac{(2^{M_0}-1)}{M_2-M_0}E^{-M_0M_1} & \ \text{if} \ \ M_2 \ge M_0 \\
\left(\frac{{\log E}^{\frac{1}{M_0}}}{E} \right)^{-M_0M_1}
& \ \text{if} \ \ M_2 = M_0 \\
(M_0-M_2-1)^{M_1}E^{-M_2M_1} & \ \text{if} \ \ M_2 \le M_0
\end{array}
\right.
\end{eqnarray}
for large transmit power $E$ and considering only the highest order terms of
$E$. Recall that
\begin{equation}
\label{pepcond}
P\left({\bS}_{1l} \rightarrow {\bS}_{1m}\right) = \bbE_{\{\lambda_{\bG}\}}P\left({\bS}_{1l} \rightarrow {\bS}_{1m}|
\lambda_{\bG}=\lambda_0\right).
\end{equation}
To evaluate this expectation we need to find the PDF of $\lambda_{\bG}$.
It turns out that explicitly finding the PDF of $\lambda_{\bG}$ is quite
difficult. To simplify the problem we use an upper bound on the PDF 
of $\lambda_{\bG}$ which is summarized in the next lemma.

\begin{lemma}
\label{wishartpdf}
For $M_1 \ge M_2$, the PDF of the maximum eigenvalue $\lambda_{\bG}$ of $\frac{1}{M_1}\bG^*\bG$ can be
upper bounded as
\[f_{\lambda_{\bG}}(\lambda_0) \le k_1\lambda_0^{M_1M_2-1}e^{-M_1\lambda_0}\ \]
where
\[k_1 = \frac{2^{M_2-1}M_1^{M_1M_2}}{\prod_{j=1}^{M_2}\Gamma(M_1-M_2+1)\Gamma(j)
\prod_{j=1}^{M_2}(M_1-M_2+2j-1)(M_1-M_2+2j(M_1-M_2+2j+1))}.\]
\end{lemma}
\vspace{0.2in}
\begin{proof}
Follows from Corollary $1$ \cite{Jing2004d}.
\end{proof}
\begin{rem}
From here on we evaluate the expectation in PEP upper bound for the case of $M_1 \ge M_2$ only.
For the other case, the analysis follows similarly, since
the PDF of $\lambda_{max}\left(\frac{\bG^*\bG}{M_1}\right)$, for $M_2<M_1$, can be
obtained from  Lemma \ref{wishartpdf} by switching the roles of
$M_1$ and $M_2$ and using the fact that $\lambda_{max}\left(\frac{\bG^*\bG}{M_1}\right)
= \lambda_{max}\left(\frac{\bG\bG^*}{M_1}\right)$.
\end{rem}
From (\ref{pepcond}),
\begin{eqnarray*}
P\left({\bS}_{1l} \rightarrow {\bS}_{1m} \right) &= &\int_{0}^{\infty} P\left({\bS}_{1l} \rightarrow {\bS}_{1m} |
\lambda_{\bG} = \lambda_0 \right)f_{\lambda_{\bG}}(\lambda_0)d\lambda_0.
\end{eqnarray*}
Using Lemma \ref{wishartpdf} and (\ref{divcond}),
\begin{eqnarray}
\label{pepquasi}
\nonumber
P\left({\bS}_{1l} \rightarrow {\bS}_{1m} \right)
& \le & 
\int_{0}^{\infty}\left(\frac{8M_1M_0L(\mu+\lambda_0)}{\sigma_{min}}\right)^{\min{\{M_0,M_2\}}M_1
} \\
&&\lambda_0^{M_1M_2-1}e^{-M_1\lambda_0}d\lambda_0\times \left\{\begin{array}{cc}
\frac{(2^{M_0}-1)}{M_2-M_0}E^{-M_0M_1} &   \ \text{if} \ \ M_2 \ge M_0 \\
\left(\frac{{\log E}^{\frac{1}{M_0}}}{E} \right)^{-M_0M_1}
& \ \text{if} \ \ M_2 = M_0 \\
(M_0-M_2-1)^{M_1}E^{-M_2M_1} &  \ \text{if} \ \ M_2 \le M_0
\end{array}
\right. .
\end{eqnarray}
Moreover, defining
\begin{eqnarray*}
k_2 &\bydef& \int_{0}^{\infty}  (\mu+\lambda_0)^{\min{\{M_0,M_2\}}M_1}\lambda_0^{M_1M_2-1}e^{-M_1\lambda_0} \  d\lambda_0\\
& =& \frac{c^i\sum_{i=0}^{\min\{M_0,M_2\}M_1}{\min\{M_0,M_2\}M_1\choose i}(\min\{M_0,M_2\}M_1+M_0-(i+1))!}{M_1^{-\left(\min\{M_0,M_2\}M_1+M_2M_1-i\right)}}
\end{eqnarray*}
the upper bound on PEP (\ref{pepquasi}) simplifies to
\begin{equation}
\label{diversity2hop}
P\left({\bS}_{1l} \rightarrow {\bS}_{1m}\right)
\le
k_3\times\left\{\begin{array}{cc}
\frac{(2^{M_0}-1)}{M_2-M_0}E^{-M_0M_1} &  \ \text{if} \ \ M_2 \ge M_0 \\
\left(\frac{{\log E}^{\frac{1}{M_0}}}{E} \right)^{-M_0M_1}
&  \ \text{if} \ \ M_2 = M_0 \\
(M_0-M_2-1)^{M_1}E^{-M_2M_1} &  \ \text{if} \ \ M_2 \le M_0
\end{array}
\right.,
\end{equation}
where $k_3 =\frac{k_1k_2}{\left((M_0-1)!\right)}
\left(\frac{8M_1M_0L}{\sigma_{min}}\right)^{\min{\{M_0,M_2\}}M_1
}$.
By the definition of diversity gain, from (\ref {diversity2hop}) it is
clear that diversity gain of COSTBC is $\min{\{M_0,M_2\}}M_1$, which equals
the upper bound from Theorem \ref{upbound}.
\end{proof}

Next we provide an alternate and simpler proof of Theorem \ref{2hopdiv}.
The outage probability formulation \cite{Zheng2003} and the single symbol
decodable property of COSTBCs is used to derive
 this proof. The purpose of this alternative proof is to highlight the fact
that the single symbol decodable property of COSTBCs not only minimizes the
decoding complexity but also improves analytical tractability.

\begin{proof} (Theorem \ref{2hopdiv})
The outage probability $P_{out}(R)$ is defined as
\[P_{out}(R) \bydef P\left(I(\bs;\br)\le R\right),\]
where $\bs$ is the input and $\br$ is the output of the channel
and $I(\bs;\br)$ is the mutual information between $\bs$ and $\br$ \cite{Cover2004}.

Let $\SNR \bydef \frac{E}{\sigma^2}$.
Following \cite{Zheng2003}, let ${\cal C}(\SNR)$ be a family of codes one for
each $\SNR$. We define $r$ as the spatial multiplexing gain of
${\cal C}(\SNR)$ if the data rate $R(\SNR)$ scales as $r$ with respect to
$\log \SNR$, i.e.
\[\lim_{\SNR\rightarrow \infty}\frac{R(\SNR)}{\log \SNR} =r\]
and $d$ as the rate of fall of probability of error $P_e$ of ${\cal
C}(\SNR)$ with respect to $\SNR$, i.e.
\[P_{e}(\SNR) \expeq \SNR^{-d}.\]
Let $d_{out}(r)$ be the $\SNR$ exponent of $P_{out}$ with rate of transmission
$R$ scaling as $r\log \SNR$, i.e.
\[\log P_{out}(r\log \SNR)\expeq \SNR^{-d_{out}(r)},\]
then it is shown in \cite{Zheng2003} that
\[P_{e}(\SNR) \expeq P_{out}(r\log \SNR) \expeq \SNR^{-d_{out}(r)}.\]
Thus, to compute the diversity gain of any coding scheme it is sufficient
to compute $d_{out}(r)$. In the following we compute $d_{out}(r)$ for the
COSTBC with a $2$-hop network.

For the $2$-hop network, using the single symbol decodable property of COSTBCs
(Appendix \ref{ssdcostbc}), the received signal can be separated in terms of
the individual constituent symbols of the OSTBC transmitted by the source.
Therefore, the received signal can be written as
\begin{equation}
\label{indhypfullchan}
r_l = \sqrt{\theta E}\sum_{j=1}^{M_2}\sum_{k=1}^{M_1}|g_{kj}|^2\left(\sum_{m=1}^{M_0}|h_{mk}|^2\right) s_l + z_l
\end{equation}
where $\theta$ is the normalization constant so as to ensure the total power
constraint of $E$ in the network,
$s_l$ is the $l^{th}, \ \ l=1,2,\ldots,L$ symbol transmitted from the
source and $z_l$ is the additive white Gaussian noise (AWGN) with variance $\sigma^{2}$.
Let $\SNR \bydef \frac{\theta E}{\sigma^2}$, then
\begin{eqnarray*}
P_{out}(r\log\SNR) & = & P\left(1+\SNR\sum_{j=1}^{M_2}\sum_{k=1}^{M_1}|g_{kj}|^2\left(\sum_{m=1}^{M_0}|h_{mk}|^2\right) \le r\log \SNR\right) \\
 & \expl & P\left(\sum_{k=1}^{M_1}\sum_{j=1}^{\min\{M_0,M_2\}}|g_{kj}|^2|h_{jk}|^2 \le \SNR^{-(1-r)}\right) \\
&\le&P\left(\max_{\{j=1,\ldots,\min\{M_0,M_2\}, \ k=1,\ldots,M_1\}}|g_{kj}|^2|h_{jk}|^2 \le \SNR^{-(1-r)}\right).
\end{eqnarray*}
Since $|g_{kj}|^2|h_{jk}|^2$ are i.i.d. for
$j=1,\ldots,\min\{M_0,M_2\}, \ k=1,\ldots,M_1$ and
the total number of terms are $\min\{M_0M_1,\ M_1M_2\}$,
\begin{eqnarray*}
P_{out}(r\log\SNR) &\expeq& P\left(|g_{11}|^2|h_{11}|^2 \le \SNR^{-(1-r)}\right)^{\min\{M_0M_1,\ M_1M_2\}}.
\end{eqnarray*}
Note that $P\left(|g_{11}|^2|h_{11}|^2 \le \SNR^{-(1-r)}\right)$ is the outage probability of
a single input single output system which can be computed easily using
\cite{Zheng2003} and is given by
\[P\left(|g_{11}|^2|h_{11}|^2 \le \SNR^{-(1-r)}\right) \expeq \SNR^{-(1-r)}, \ r\le 1.\]
Thus,
\[P_{out}(r\log\SNR) \expeq \SNR^{-\min\{M_0M_1,\ M_1M_2\}(1-r)}, \ r\le 1,\]
and we have shown that $d_{out}(r) = \min\{M_0M_1,\ M_1M_2\}(1-r), \ r\le 1$,
from which it follows that the diversity gain of COSTBC is
$d_{out}(0) = \min\{M_0M_1,\ M_1M_2\}$ as required.
\end{proof}
{\it Discussion:}
In this section we derived an upper bound on the PEP of COSTBCs for a $2$-hop network from which we lower bounded
the diversity gain of COSTBCs for a $2$-hop network. We showed that the lower
bound on the diversity gain of COSTBCs equals the upper bound from Theorem \ref{upbound} and thus concluded that COSTBCs achieve the maximum diversity gain in a $2$-hop
network.

We presented two different proofs that show the optimality of COSTBCs in
the sense of achieving the maximum diversity gain in $2$-hop network. In the first proof
we directly worked with the PEP using maximum likelihood
detection while in the second proof we used the outage probability formulation
\cite{Zheng2003}. The purpose of giving two proofs is to highlight the different
ideas one can use to upper bound the PEP of multi-antenna multi-hop
communication systems for possible extensions to more complex channels.

The main difficulty in upper bounding the PEP of COSTBCs was due to the fact that
the covariance matrix $\bR_{\bW}$ of noise received at the
destination is not a diagonal matrix.
In the first proof we simplified the problem by upper bounding
the maximum eigenvalue of $\bR_{\bW}$ by the eigenvalues
of $\frac{\bG\bG^{*}}{M}$ and
then used standard techniques to upper bound the PEP.
In the second proof we used the outage probability formulation \cite{Zheng2003}
to lower bound the diversity gain of COSTBCs for $2$-hop network.
To upper bound the outage probability, we used the single symbol decodable property of COSTBCs and
showed that the exponent of the outage probability with COSTBCs
is $\min\{M_0M_1, M_1M_2\}$ times the exponent of the outage probability of
 SISO system whose diversity gain is $1$. Thus we concluded that the
diversity gain of COSTBCs is $\min\{M_0M_1, M_1M_2\}$.

\section{Diversity Gain Analysis of COSTBC for Multi-Hop Case}
\label{sec:multihop}
In this section we show that COSTBCs achieve
the maximum diversity for a $N$-hop network where $N\ge 2$.
Recall that with COSTBC the source and each relay stage use an
OSTBC to communicate with
the following relay stage. With CSI available at each relay,
in Appendix \ref{ssdcostbc} we show that COSTBCs have the single
symbol decodability property
similar to OSTBC. Thus, with the COSTBCs
each of the constituent symbols of the OSTBC transmitted by the source
can be decoded independently of all the other symbols at any relay of any relay
stage or at the
destination without any loss in performance compared to joint decoding.
We use this property to show that
the COSTBCs achieve the upper bound on the diversity gain of an $N$-hop network
given by Theorem \ref{upbound}.

%

\begin{thm}
With COSTBCs, a
diversity gain of $\min\{M_nM_{n+1}\} \ \ n=0,1,\ldots,N-1$ is achievable for
a $N$-hop
network.
\end{thm}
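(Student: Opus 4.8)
The plan is to lift the second, outage-based proof of Theorem~\ref{2hopdiv} to $N$ hops, using induction on the number of hops only to pin down the scalar channel seen by each constituent symbol. By the single symbol decodable property proved in Appendix~\ref{ssdcostbc}, the entire cascade collapses, for each transmitted symbol $s_l$, to the scalar channel $r_l = \sqrt{\theta E}\,\Lambda_N\, s_l + z_l$, exactly as in~(\ref{indhypfullchan}) for the two-hop case, where $\theta$ is a deterministic normalization enforcing the total power budget, $z_l$ is AWGN, and
\[
\Lambda_N = \sum_{\ell=1}^{M_N}\sum_{k_{N-1}=1}^{M_{N-1}}\cdots\sum_{k_1=1}^{M_1}\sum_{m=1}^{M_0}|g_{k_{N-1}\ell}|^2\,|f^{N-2}_{k_{N-2}k_{N-1}}|^2\cdots|f^{1}_{k_1k_2}|^2\,|h_{mk_1}|^2
\]
is the sum over every source-to-destination path of the product of the squared channel magnitudes encountered along that path. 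The first step is to establish this form. I would do so by induction on the stage index: at stage~$1$ the OSTBC combining of~(\ref{ssd}) produces, at relay~$k$, the partial gain $\sum_{m=1}^{M_0}|h_{mk}|^2$; assuming relay~$k$ of stage~$n$ forwards a symbol with accumulated partial gain $\Lambda_n^{(k)}$, the OSTBC combining performed by relay~$k'$ of stage~$n+1$ yields, up to exponent-neutral normalization constants, the gain $\sum_{k=1}^{M_n}|f^{n}_{kk'}|^2\,\Lambda_n^{(k)}$, and summing over destination antennas at the last hop gives $\Lambda_N$ above.

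With the scalar channel in hand the diversity follows from an outage computation $P_{out}(r\log\SNR)=P(\Lambda_N\le\SNR^{-(1-r)})$ that mirrors the two-hop argument. Since every summand of $\Lambda_N$ is non-negative, I would lower bound $\Lambda_N$ by the partial sum over a set $\cP$ of pairwise \emph{edge-disjoint} paths (paths sharing no channel coefficient), whence the retained path products are mutually independent and
\[
P(\Lambda_N\le\SNR^{-(1-r)})\le\prod_{p\in\cP}P\left(\text{product along }p\le\SNR^{-(1-r)}\right).
\]
Each retained path product is a product of $N$ i.i.d. gain variables, and $P(\prod_{i=1}^N X_i\le\SNR^{-(1-r)})\expeq\SNR^{-(1-r)}$ because the accompanying $(\log\SNR)^{N-1}$ factor is invisible to exponential equality; thus each edge-disjoint path contributes exactly unit diversity, just as the scalar term $|g_{11}|^2|h_{11}|^2$ did for $N=2$.

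The remaining ingredient is combinatorial: the maximum number of pairwise edge-disjoint source-to-destination paths in the layered, stage-to-stage complete-bipartite connectivity graph equals $\min_n M_nM_{n+1}$. I would argue this by a max-flow/min-cut (Menger) argument with unit capacity on each inter-stage channel link, since at most $M_nM_{n+1}$ paths can cross the cut between stages~$n$ and~$n+1$ and an integral flow of value $\min_n M_nM_{n+1}$ realizes that many edge-disjoint paths. Taking $|\cP|=\min_n M_nM_{n+1}$ then gives $P_{out}(r\log\SNR)\expl\SNR^{-\min_n\{M_nM_{n+1}\}(1-r)}$, so $d_{out}(0)\ge\min_n\{M_nM_{n+1}\}$; together with the matching upper bound of Theorem~\ref{upbound} this yields equality and proves the claim.

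I expect the principal obstacle to be the rigorous justification of the effective-gain form in the exponential sense: one must verify that the per-stage normalization factors $\sqrt{E_nM_n/(L\gamma)}$ and the noise amplified through successive relay stages contribute only $\SNR$-exponent-neutral terms under the balanced allocation $E_n\expeq E$, so that the scalar reduction and the recursion for $\Lambda_n^{(k)}$ are legitimate. The path-counting step is conceptually clean but should be stated carefully; note that, unlike the eigenvalue-density argument used in the first two-hop proof, the edge-disjoint-path count produces the minimum of the two antenna counts at each hop automatically and symmetrically, so no separate role-swapping between $M_n$ and $M_{n+1}$ is required.
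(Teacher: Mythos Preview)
Your proposal is correct and takes a genuinely different route from the paper's own proof. The paper proceeds by induction on the number of hops: it writes the $(k{+}1)$-hop scalar channel as $\sum_{i=1}^{M_k} c_i\bigl(\sum_{j=1}^{M_{k+1}}|g_{ij}|^2\bigr)$, isolates a single summand $q_i=c_i\sum_j|g_{ij}|^2$, and computes its outage exponent by conditioning on whether the last-hop gain $Z=\sum_j|g_{ij}|^2$ lies above or below $\SNR^{-(1-r)}$, invoking the induction hypothesis on $c_i$ in one branch and the gamma tail of $Z$ in the other; the passage from the diversity of each $q_i$ to that of $\sum_i q_i$ is then argued via independence of the last-hop gains across $i$. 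You instead unfold the whole cascade at once into the path sum $\Lambda_N$, extract a maximal family of edge-disjoint paths to obtain independent summands, and read off their number as the min cut $\min_n M_nM_{n+1}$ via Menger/max-flow. Your approach makes the appearance of the minimum over hops transparent (it is literally the bottleneck cut), treats all hops symmetrically without the paper's auxiliary single-antenna induction hypothesis, and sidesteps the somewhat delicate step in the paper where the $M_k$ terms $q_i$ are summed despite the $c_i$ sharing upstream channel coefficients; the price is importing a graph-theoretic integral-flow argument that the paper does not need. Both proofs share the caveat you correctly flag, that the per-stage normalizations and the forwarded-noise variance are $\SNR$-exponent-neutral; the paper is no more explicit than you are on this point for general $N$.
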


\begin{proof} We use induction to prove the Theorem.
From Section \ref{sec:2hop} the result is true for a $2$-hop network, and
hence we can start the induction.
Now assume that the result is true for a $k$-hop network $(k\ge 2) $
and we will prove that it is true for a $k+1$-hop network.

For a $k$-hop network using the single symbol decodable property of COSTBCs
as shown in Appendix \ref{ssdcostbc}, at the destination the
received signal can be separated in terms of the individual
constituent symbols of the OSTBC transmitted by the source. Thus the
received signal can be written as
\begin{equation}
\label{indhypfullchan}
r_{\ell} = \sqrt{\theta E}\sum_{i=1}^{M_{k}}c_{i} s_{\ell} + z_{\ell},
\end{equation}
where $\theta$ is the normalization constant so as to ensure the total power
constraint of $E$ in the network,
$s_{\ell}$ is the ${\ell}^{th}, \ \ \ell=1,2,\ldots,L$
symbol transmitted from the source, $c_{i}$ is the channel gain experienced
by $s_{\ell}$ at the $i^{th}$ antenna of the destination, and $z_l$ is the additive white Gaussian noise (AWGN) with variance $\sigma_{k}^2$.

Now we extend the $k$-hop network to a $k+1$-hop network by
assuming that the actual destination to be one more hop away and
using the destination of the $k$-hop case as
the $k^{th}$ relay stage with $M_k$ relays by separating the $M_{k}$ antennas into $M_{k}$
relays with single antenna each.
Again using the single symbol decodable property of COSTBCs for the $k+1$-hop network,
as shown in the Appendix \ref{ssdcostbc}, the received
signal at the
destination can be separated in terms of individual constituent symbols of the OSTBC transmitted
by the source, which is given by
\[y_{\ell} = \sqrt{\kappa E}\sum_{i=1}^{M_{k}}c_{i}\left(\sum_{j=1}^{M_{k+1}}|g_{ij}|^2\right)s_{\ell} + n_{\ell}, \ \ \ell = 1,\ldots, L\]
where $\kappa$ is a constant to
ensure the power constraint of $E$ in the $k+1$-hop network,
$g_{ij}$ is the channel between the $i^{th}$ relay of relay stage $k$ and
the $j^{th}$ antenna of the destination and $n_l$ is the AWGN with variance
$\sigma^2_{k+1}$.

Defining $q\bydef \sum_{i=1}^{M_k}q_i$ and $q_i \bydef c_{i}\left(\sum_{j=1}^{M_{k+1}}|g_{ij}|^2\right)$, we can write

\begin{equation}
\label{indfullchan}
y_{\ell} = \sqrt{\kappa E}qs_{\ell} + n_{\ell}
\end{equation}
and
\begin{equation}
\label{indpartchan}
y_{\ell i} = \sqrt{\kappa E}q_is_{\ell} + n_{\ell i}
\end{equation}
for each $\ell=1,\ldots, L$, where $y_{\ell} = \sum_{i=1}^{M_k}y_{\ell i}$ and $n_{\ell i} = n_{\ell}/M_{k}$.

Recall from induction hypothesis that the diversity gain of COSTBCs with
channel $c_{i}, \ \forall
i$ (\ref{indhypfullchan}) is
$\alpha \bydef \min\left\{\min\left\{M_nM_{n+1}\right\},\ M_{k-1}\right\}, \ \
n=0,1,\ldots,k-2$,
by restricting the destination of the $k$-hop network to have only single
antenna, and with channel
$\sum_{i=1}^{M_{k}}c_{i}$ is $\min\left\{M_nM_{n+1}\right\},
\ n=0,1,\ldots,k-1$, respectively. Thus, if the diversity
gain of  COSTBCs with channel $q_i$ (\ref{indpartchan}) is
$\min\left\{\min\left\{M_nM_{n+1}\right\},\ M_{k-1}, M_{k+1}\right\}$ $
n=0,1,\ldots,k-2$,
then, since $\sum_{j=1}^{M_{k+1}}|g_{ij}|^2$ are independent $\forall \ i$,
it follows that the diversity gain of COSTBCs with channel
$\sum_{i=1}^{M_{k}}q_i$ is  $\min\left\{M_nM_{n+1}\right\},
\ n=0,1,\ldots,k$.
Next, we show that the diversity gain of COSTBCs with
channel $q_i$ is $\min\left\{\min\left\{M_nM_{n+1}\right\},\ M_{k-1}, M_{k+1}\right\}, \ \
n=0,1,\ldots,k-2$.

To compute the diversity gain of COSTBCs with channel $q_i$ (\ref{indpartchan}),
we use the outage probability formulation \cite{Zheng2003} as follows.
Let $\sigma^2$ be the variance of $n_{\ell i}$ (\ref{indpartchan}), 
$\sigma^2 = \frac{\sigma^2_{k+1}}{M_{k}^2}$, and as before $\SNR \bydef 
\frac{\kappa E}{\sigma^2}$, then the outage probability of (\ref{indpartchan}) is
\begin{eqnarray*}
P_{out}(r\log \SNR) & \bydef & P \left(\log\left(1 + 
\SNR c_i\sum_{j=1}^{M_{k+1}}|g_{ij}|^2\right) \le r \log \SNR) \right).
\end{eqnarray*}
{\small
\begin{eqnarray}
\nonumber
P_{out}(r\log \SNR) & \expeq & P \left(c_i\sum_{j=1}^{M_{k+1}}|g_{ij}|^2\le \SNR^{-(1-r)} \right) \\\nonumber
& = & P \left(\sum_{j=1}^{M_{k+1}}|g_{ij}|^2\le \SNR^{-(1-r)} \right)
P \left(c_i\sum_{j=1}^{M_{k+1}}|g_{ij}|^2\le \SNR^{-(1-r)} |
\sum_{j=1}^{M_{k+1}}|g_{ij}|^2\le \SNR^{-(1-r)}\right) \\ \nonumber
&& + \ P\left(\sum_{j=1}^{M_{k+1}}|g_{ij}|^2 > \SNR^{-(1-r)} \right)
P\left(c_i\sum_{j=1}^{M_{k+1}}|g_{ij}|^2\le \SNR^{-(1-r)} | \sum_{j=1}^{M_{k+1}}|g_{ij}|^2 > \SNR^{-(1-r)}\right) \\ \nonumber
& \le & P \left(\sum_{j=1}^{M_{k+1}}|g_{ij}|^2\le \SNR^{-(1-r)} \right) +
P\left(c_i\sum_{j=1}^{M_{k+1}}|g_{ij}|^2\le \SNR^{-(1-r)} | \sum_{j=1}^{M_{k+1}}|g_{ij}|^2 > \SNR^{-(1-r)}\right).
\end{eqnarray}
Let $Z\bydef \sum_{j=1}^{M_{k+1}}|g_{ij}|^2$. Then
\begin{eqnarray*}
P_{out}(r\log \SNR)
& \expl & P \left(Z\le \SNR^{-(1-r)} \right) +
\int_{\SNR^{-(1-r)}}^{\infty} \int_{0}^{\SNR^{-(1-r)}/z} f_{c_i}(y) dy f_Z(z)dz.
\end{eqnarray*}}
By induction hypothesis, the diversity gain of COSTBCs with $c_i$ is
$\alpha$, i.e.,
\[P\left(c_i\le \frac{\SNR^{-(1-r)}}{z}\right) = \int_{0}^{\SNR^{-(1-r)}/z} f_{c_i}(y) dy \le k_4
\left(\frac{\SNR^{-(1-r)}}{z}\right)^{\alpha}\]
where $k_4$ is a constant. Thus,
\begin{eqnarray}
\label{poutmultihop}
P_{out}(r\log \SNR)
& \le & P \left(Z\le \SNR^{-(1-r)} \right) +
\int_{\SNR^{-(1-r)}}^{\infty} k_4 \SNR^{-\alpha(1-r)}\left(\frac{1}{z}\right)
^{\alpha}
  f_Z(z)dz.
\end{eqnarray}
Since $Z$ is a gamma distributed random variable with PDF
$\frac{e^{-z}z^{M_{k+1}-1}}{M_{k+1}-1!}$, the first term in $P_{out}(r\log \SNR)$
expression can be found in \cite{Zheng2003} and is given by
\[P \left(Z\le \SNR^{-(1-r)} \right) \expeq \SNR^{-M_{k+1}(1-r)}.\]
Now we are left with computing the second term which can be done as follows.
\begin{eqnarray*}
\int_{\SNR^{-(1-r)}}^{\infty}k_4 \SNR^{-\alpha(1-r)}\left(\frac{1}{z}\right)^{\alpha}f_Z(z)dz &
= & k_4 \SNR^{-\alpha(1-r)} \int_{\SNR^{-(1-r)}}^{\infty}
z^{-\alpha}\frac{e^{-z}z^{M_{k+1}-1}}{M_{k+1}-1!} dz\\
&=& \frac{k_4}{M_{k+1}-1!} \SNR^{-\alpha(1-r)} c_5,
\end{eqnarray*}
where
\[c_5 \le
\left\{\begin{array}{cc} M_{k+1}-\alpha -1! &  \ \text{if} \ \ \alpha < M_{k+1} \\
(-1)^{\alpha-M_{k+1}+1}
\frac{Ei(-\SNR^{-(1-r)})}{\alpha-M_{k+1}}+
\sum_{k=0}^{\alpha-M_{k+1}-1}
\frac{(-1)^k\exp^{(-\SNR^{-(1-r)})}\SNR^{-k(1-r)}}
{(\alpha-M_{k+1})(\alpha-M_{k+1}-1)\ldots(\alpha-M_{k+1}-k)}
&  \ \text{if} \ \ \alpha \ge M_{k+1}
\end{array}
\right.\ \]
from \cite{Gradshteyn1994}.
Thus, from (\ref{poutmultihop}) it follows that
\[P_{out}(r \log \SNR) \expeq \SNR^{-M_{k+1}(1-r)} + \SNR^{-\alpha(1-r)}.\]
which implies that
\[P_{out}(r \log \SNR) \expeq \SNR^{-\min\{M_{k+1},\alpha\}(1-r)}.\]
Using the definition of diversity gain, it follows that the diversity
gain of COSTBCs with channel $q_i$ is equal to $\min\{\alpha, M_{k+1}\}$,
which implies that the diversity gain of COSTBCs with channel $q$
(\ref{indfullchan}) is $\min\{\alpha M_k, M_kM_{k+1}\}$.
Note that the upper bound on the
diversity gain (Theorem \ref{upbound}) is also $\min\{\alpha M_k, M_kM_{k+1}\}$
and we conclude that the COSTBCs achieve the
maximum diversity gain in a $N$-hop network.

\end{proof}
{\it Discussion:} In this section we showed that COSTBCs achieve
a diversity gain of $\min\left\{M_nM_{n+1}\right\}$
$n=0,1,\ldots,N-1$
in an $N$-hop network which equals the upper bound obtained in
Theorem \ref{upbound} for arbitrary integer $N$.
Thus we showed that the COSTBCs are optimal in terms of achieving the maximum
diversity gain of $N$-hop network.

To obtain this result we used
the single symbol decodable property of COSTBCs and mathematical induction.
Using the single symbol decodable property we were able to decouple the different constituent
symbols of the OSTBC transmitted by the source, at the destination
which made the diversity gain analysis easy.

\section{Code Design}
\label{sec:code}
In this section, we explicitly construct COSTBCs that
achieve maximum diversity gain in $N$-hop networks.
We present examples of COSTBCs for $N=2$, $M_0=M_1=2$ using the Alamouti
code \cite{Alamouti1998}, $N=2$, $M_0=M_1=4$ using the rate-$3/4$ $4$ antenna OSTBC
\cite{Tarokh1999} and $N=2$, $M_0=M_1=4$ using
the rate-$3/4$ $4$ antenna OSTBC and the Alamouti code.
\begin{exm}(Cascaded Alamouti Code)
\label{codecascala}
We consider $N=2$, $M_0=M_1=2$ case and let $\bS_0$ be the Alamouti code
given by:$\left[\begin{array}{cc}
s_1 & s_2 \\
-s_2^{*} & s_1^{*}
\end{array}\right]$
where $s_1$ and $s_2$ are constituent symbols of the Alamouti code.
The $2\times 1$ received signal at relay $m$ is
\[\left[\begin{array}{c}
r_{1m} \\
r_{2m} \end{array}\right] = \sqrt{E_0}\left[\begin{array}{cc}
s_1 & s_2 \\
-s_2^{*} & s_1^{*}
\end{array}\right]\left[\begin{array}{c}
h_{1m} \\
h_{2m} \end{array}\right] + \left[\begin{array}{c}
n_{1m} \\
n_{2}m \end{array}\right]\]
for $m=1,2$.
Transforming this in the usual way
\[\left[\begin{array}{c}
r_{1m} \\
-r_{2m}^* \end{array}\right] = \sqrt{E_0}
\underbrace{\left[\begin{array}{cc}
h_{1m} & h_{2m} \\
-h_{2m}^{*} & h_{1m}^{*}
\end{array}\right]}_{\tilde\bH_m}\left[\begin{array}{c}
s_{1} \\
s_{2} \end{array}\right] + \left[\begin{array}{c}
n_{1m} \\
-n_{2m}^{*} \end{array}\right]\]
for $m=1,2$.
We define ${\tilde h}_m \bydef |h_{1m}|^2+ |h_{2m}|^2$,
$\eta_{1m} \bydef (n_{1m}h_{1m}^{*} + n_{2m}^{*}h_{2m})$, and
$\eta_{2m} \bydef (n_{1m}h_{2m}^{*} - n_{2m}^{*}h_{1m})$.
Pre-multiplying by ${\tilde \bH_m}^{*}$,
\begin{eqnarray*}
\label{relaytxala}
\left[\begin{array}{c}
\hat{r}_{1m} \\
\hat{r}_{2m}^{*} \end{array}\right] \bydef
{\tilde \bH_m}^{*}\left[\begin{array}{c}
\hat{r}_{1m} \\
\hat{r}_{2m}^{*} \end{array}\right] &=&
\sqrt{E_0}\left[\begin{array}{c}
{\tilde h}_ms_{1}\\
{\tilde h}_ms_{2}
\end{array}\right] + \left[\begin{array}{c}
\eta_{1m}\\
\eta_{2m}\end{array}\right]
\end{eqnarray*}
for $m=1,2$.
Now using
\[\bA_1 = \left[\begin{array}{cc}
1 & 0 \\
0 & 1 \end{array}\right], \bB_1 = {\bf 0}_2, \ \ \bA_2 = {\bf 0}_2,
\bB_2 = \left[\begin{array}{cc}
0 & -1 \\
1 & 0 \end{array}\right]\]
the STBC $\bS_1$ formed by the two relays is of the form
$\left[\begin{array}{cc}
s_1 & -s_2^{*} \\
s_2 & s_1^{*}
\end{array}\right]$
which is an Alamouti code and hence an OSTBC as required.
Note that $\bA_i, \bB_i \ i=1,2$ satisfy the requirements of (\ref{hurrad}).
We call this the {\it cascaded Alamouti code}.

\end{exm}

\begin{exm}
In this example we consider the case $N=2$, $M_0=4$, $M_1=4$.
We choose $\bS_0$ to be
the rate-$3/4$ OSTBC for $4$ transmit antennas given by
\[\left[\begin{array}{cccc}       s_1     & s_2      & s_3           & 0\\
                                 -s_2^{*} & s_1^{*}  & 0             & s_3 \\
                                  s_3^{*} & 0        & -s_1^{*}      & s_2\\
                                   0      & s_3^{*}  & -s_2^{*}      & -s_1
\end{array}\right]\]
and use
\[\bA_1 = \left[\begin{array}{ccc} 1 & 0 & 0 \\
                   0 & 0 & 0\\
                   0 & 0 & 0\\
                   0 & 0 & 0\\
\end{array}\right],
\bA_2 = \left[\begin{array}{ccc}
                   0 & 1 & 0\\
                   0 & 0 &0\\
                   0 & 0 &0\\
                   0 & 0 &0\\
\end{array}\right],
 \ \
\bA_3 = \left[\begin{array}{ccc}
                   0 & 0 & 1 \\
                   0 & 0 & 0\\
                   0 & 0 & 0\\
                   0 & 0  & 0\\
\end{array}\right],
\ \
\bA_4 = \left[\begin{array}{ccc}
                   0 & 0 & 0 \\
                   0 & 0 & 1\\
                   0 & 1 & 0\\
                   -1 & 0 & 0\\
\end{array}\right]\]
and
\[\bB_1 = \left[\begin{array}{ccc} 0 & 0 & 0 \\
                   0 & -1 & 0\\
                   0 & 0 & 1\\
                   0 & 0 & 0\\
\end{array}\right],
\bB_2 = \left[\begin{array}{ccc}
                   0 & 0 & 0\\
                   1 & 0 &0\\
                   0 & 0 &0\\
                   0 & 0 &1\\
\end{array}\right],
 \ \
\bB_3 = \left[\begin{array}{ccc}
                   0 & 0 & 0 \\
                   0 & 0 & 0\\
                   -1 & 0 & 0\\
                   0 & -1  & 0\\
\end{array}\right],
\ \
\bB_4 = \left[\begin{array}{ccc}
                   0 & 0 & 0 \\
                   0 & 0 & 0\\
                   0 & 0 & 0\\
                   0 & 0 & 0\\
\end{array}\right].\]
It is easy to verify that $\tr{\left(\bA_i^{*}\bA_i + \bB_i^{*}\bB_i\right)} = 3$ and
$\bA_i^{*}\bB_i = -\bB_i^{*}\bA_i, \ \  i =1,2,3,4$ as required. The STBC
 $\bS_1$ using these $\bA_i,\bB_i\  i =1,2,3,4$ is
\[\left[\begin{array}{cccc}s_1 & s_2 & s_3 & 0\\
                              -s_2^{*} & s_1^{*} & 0 & s_3 \\
                               -s_3^{*} & 0 & s_1^{*} & s_2\\
                                0 & -s_3^{*}& s_2^{*} & -s_1
\end{array}\right]\]
which is a rate-$3/4$ OSTBC as described above.
\end{exm}
In both the previous examples we constructed COSTBC for $N=2$-hop case
by repeatedly using the same OSTBC at both the source and the relay stage.
Using a similar procedure, it is easy to see that when
$M_i=M_j \ \forall \ i,j=0,1,\ldots,N-1, \ i\ne j $
we can construct COSTBCs by using particular OSTBC for $M_0$ antennas
at the source and each relay stage, e.g. if ${\cal O}$ is an OSTBC
for $M_0$ antennas, then by using
$\bS_n= {\cal O}, \ n=0,1,\ldots, N-1$ we obtain a maximum diversity
gain achieving COSTBCs.
OSTBC constructions for different number of antennas can be found in
\cite{Tarokh1999}.

In the next example we construct COSTBC for $M_0=4$ and $M_1=2$ by
cascading the rate-$3/4$ $4$ antenna OSTBC with the Alamouti code.

\begin{exm}
Let $N=2,\ M_0=4$ and $M_1=2$.
We choose $\bS_0$ to be the rate-$3/4$ $4$ antenna
OSTBC. In this example each relay node accumulates $6$ constituent
symbols from $2$ blocks of $\bS_0$ and transmits them in $3$ blocks
of Alamouti code to the destination as follows.

Let $\bS_0^t$ be the transmitted rate-$3/4$ $4$ antenna
OSTBC at time $t, \ t=1,5,11,14,20,24,\ldots$ from the source and
$s_j^t, \ j=1,2,3$ be the $j^{th}$ constituent symbol of $\bS^t_0$, i.e.
\[\bS^t_0 = \left[\begin{array}{cccc}s^t_1 & s^t_2 & s^t_3 & 0\\
                              -s_2^{t*} & s_1^{t*} & 0 & s^t_3 \\
                               -s_3^{t*} & 0 & s_1^{t*} & s^t_2\\
                                0 & -s_3^{t*}& s_2^{t*} & -s^t_1
\end{array}\right].\]
Then the received signal at relay node $m, \ m=1,2$ at time $t=1,5,11,14,20,24,\ldots$ is
\[\br^t = \sqrt{E_0}\bS_0^t\left[\begin{array}{c}h_{1m} \\h_{2m} \\h_{3m} \\h_{4m}\end{array}\right] +  \left[\begin{array}{c}{n}_1^t \\{n}_2^t \\{n}_3^t \\{n}_4^t\end{array}\right].\]
Using CSI the received signal $\br^t$ can be transformed into ${\hat \br}^t$,
where
\[{\hat \br}^t \bydef \left[\begin{array}{c}{\hat r}_1^t \\{\hat r}_2^t \\{\hat r}_3^t\end{array}\right] = \sqrt{E_0} \left[\begin{array}{c}{\hat h}_ms^t_1 \\ {\hat h}_ms^t_2 \\{\hat h}_ms^t_3  \end{array}\right]+ \left[\begin{array}{c}{\hat n}_1^t \\{\hat n}_2^t \\{\hat n}_3^t\end{array}\right]\]
 and ${\hat h}_m = \sqrt{\sum_{i=1}^{M_0}|h_{im}|^2}$.
Then as described before, each relay accumulates $6$ constituent symbols
from $2$ consecutive transmissions of $\bS_0$ from the source, i.e.
from $\bS_0^1$ and $\bS_0^5$.
Then at time $t=9$, the relay $m, \ m=1,2$ transmits
\[\bA_m\left[\begin{array}{c}{\hat r}_1^1 \\{\hat r}_2^1\end{array}\right] + \bB_m\left[\begin{array}{c}{\hat r}_1^1 \\{\hat r}_2^1\end{array}\right]^{\dag}\] where
\[\bA_1 = \left[\begin{array}{cc} 1 & 0 \\ 0 & 1\end{array}\right], \bB_1 = {\bf 0}_2, \ \bA_2 = {\bf 0}_2, \ \bB_2 = \left[\begin{array}{cc} 0 & -1 \\ 1 & 0\end{array}\right].\] Thus at time $t=9$,
\[\bS_1 = \left[\begin{array}{cc}s^1_1 & -s_2^{1*} \\
                                 s_2^1 & s_1^{1*}\end{array}\right]\]
which is an Alamouti code transmitted from the relay stage $1$ to the destination.
Similarly, at time $t=11$, the relay $m$ transmits
\[\bA_m\left[\begin{array}{c}{\hat r}_3^{1} \\{\hat r}_1^5\end{array}\right] + \bB_m\left[\begin{array}{c}{\hat r}_3^{1} \\{\hat r}_1^5\end{array}\right]^{\dag},\] and at time $t=13$, the relay $m$ transmits
\[\bA_m\left[\begin{array}{c}{\hat r}_2^{5} \\{\hat r}_3^5\end{array}\right] + \bB_m\left[\begin{array}{c}{\hat r}_2^{5} \\{\hat r}_3^5\end{array}\right]^{\dag}.\]
These operations are repeated at the source and each relay stage
in subsequent time slots. Clearly, the relay stage transmits an Alamouti
code which is an OSTBC and hence we get an COSTBC construction for $M_0=4,
\ M_1=2$.
\end{exm}

Using a similar technique as illustrated in this example,
COSTBCs can be constructed for different number of source antenna and
relay node configurations by suitably adapting different OSTBCs.

\section{Simulation Results}
\label{sec:simulation}
In this section we provide some simulation
results to illustrate the bit error rates (BER) of COSTBCs for $2$ and $3$-hop networks.
In all the simulation plots, $E$ denotes the total power used by all nodes in
the network, i.e. $E_0 + \sum_{n=1}^{N-1}M_nE_n = E$ and the
additive noise at each relay and the destination is complex Gaussian with zero mean and unit
variance. By equal
power allocation between the source and each relay stage we mean $E_0=M_nE_n= \frac{E}{N}, \
 \forall n=1,\ldots, N-1$.

In Fig. \ref{2stage} we plot the bit error rates of a cascaded
Alamouti code and the comparable DSTBC from \cite{Jing2004d} with $4$
QAM modulation for $N=2$, $M_0=M_1=2$ and $M_2=1,2,3$ with equal
power allocation between the source and all the relays. It is easy
to see that both the cascaded Alamouti code and the DSTBC from
\cite{Jing2004d} achieves the maximum diversity gain of the $2$-hop
network, however, COSTBCs require $1 $ dB less power than the DSTBCs
from \cite{Jing2004d}, to achieve the same BER. The improved BER
performance of COSTBCs over DSTBCs from \cite{Jing2004d}, is due to
fact that with COSTBCs, each relay coherently combines the signal
received from the previous relay stage before forwarding it to the
next relay stage, while no such combining is done in
\cite{Jing2004d}.

To understand the effect of power allocation between the source and the relays
on the BER performance of cascaded Alamouti code, Fig. \ref{powercomp}
compares the BER performance of cascaded Alamouti code
for $N=2$, $M_0=M_1=2$ and $M_2=1$ with equal power allocation
and with power allocation of $ E_0= E/4$ at the source and $E_1 = 3E/8$ at
each relay.
It is clear that with unequal power allocation there is a gain of
around $1$ dB but no extra diversity gain. It turns out that it is difficult to
explicitly derive the best power allocation policy in terms of optimizing the
BER.

Next we plot the BER curves for $N=2$, $M_0=M_1=4$, and $N=2$, $M_0=4, \ M_1=2$
configurations in Figs. \ref{2stage4x4} and \ref{2stage4x2} with different
$M_2$ and using equal power allocation between the source and the relay stage.
For the $M_0=M_1=4$ case
we use the cascaded rate-$3/4$ $4$ antenna OSTBC
 and for the $M_0=4, \ M_1=2$ case we use a rate-$3/4$ $4$ antenna
OSTBC at the source and the Alamouti code across both the relays as discussed in
Section \ref{sec:code}.
In the $M_0=4, \ M_1=2$ case, both relays accumulate $6$ symbols from two
blocks of rate-$3/4$
$4$ antenna OSTBC and then relay these $6$ symbols in three blocks of
Alamouti code to the destination.
From Figs. \ref{2stage4x4} and \ref{2stage4x2} it is clear that both
these codes achieve maximum diversity gain for the respective network
configurations.

Finally, in Fig. \ref{3stage} we plot the bit error rates of
a cascaded Alamouti code with $N=3$-hop network where $M_0=M_1=M_2=2$ with
 $M_3=1,2,3$, and the cascaded Alamouti code is generated by
repeated use of the Alamouti code by each relay stage with equal power
allocation between the source and the relay stages. In this case also
it is clear that the cascaded Alamouti code achieves the maximum
diversity gain but there is a SNR loss compared to $N=2$ case, because
of the noise added by one extra relay stage.

From all the simulation plots, it is clear that COSTBCs
require large transmit power to obtain reasonable BER's with multi-hop wireless
networks.
This is a common phenomenon across all the maximum diversity gain achieving
DSTBC's for multi-hop wireless
networks that use AF \cite{Jing2004d, Jing2006a, Oggier2006k}.
With AF, the noise
received at each relay gets forwarded towards the destination and
limits the received SNR at the destination, however, without using AF it
is difficult to achieve maximum diversity gain in a multi-hop wireless network.

\begin{figure}
\centering
\includegraphics[height= 3in]{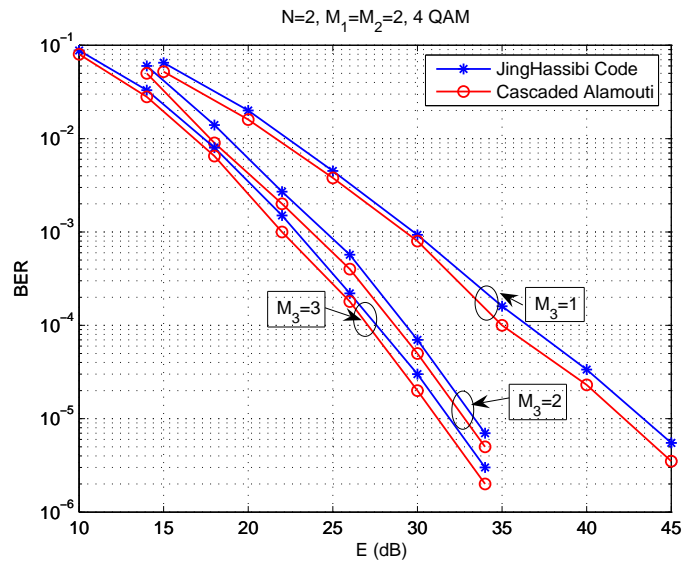}
\caption{BER comparison of Cascaded Alamouti code with JingHassibi
code for $N=2$-hop network} \label{2stage}
\end{figure}

\begin{figure}
\centering
\includegraphics[height= 3in]{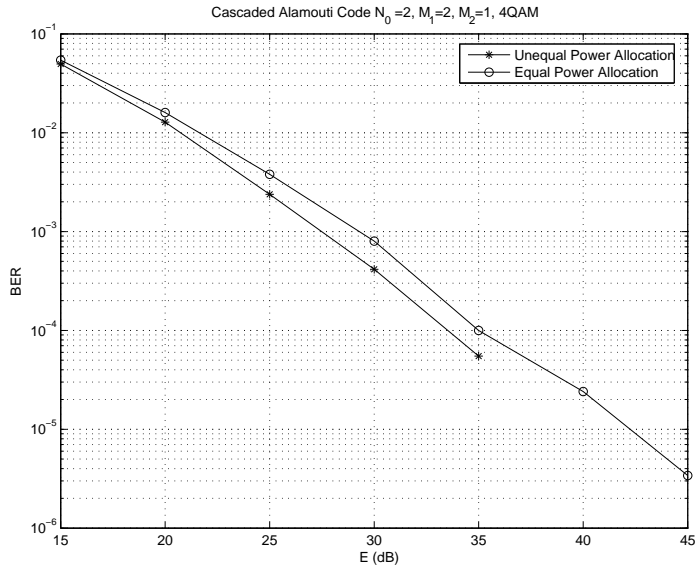}
\caption{Performance of cascaded Alamouti with varying power
allocation} \label{powercomp}
\end{figure}

\begin{figure}
\centering
\includegraphics[height= 3in]{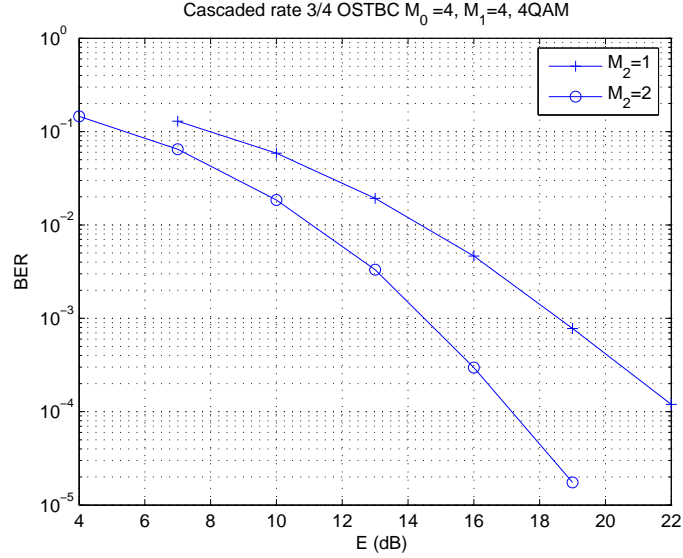}
\caption{Cascaded rate 3/4 4 antenna OSTBC for $M_0=M_1=4$}
\label{2stage4x4}
\end{figure}

\begin{figure}
\centering
\includegraphics[height= 3in]{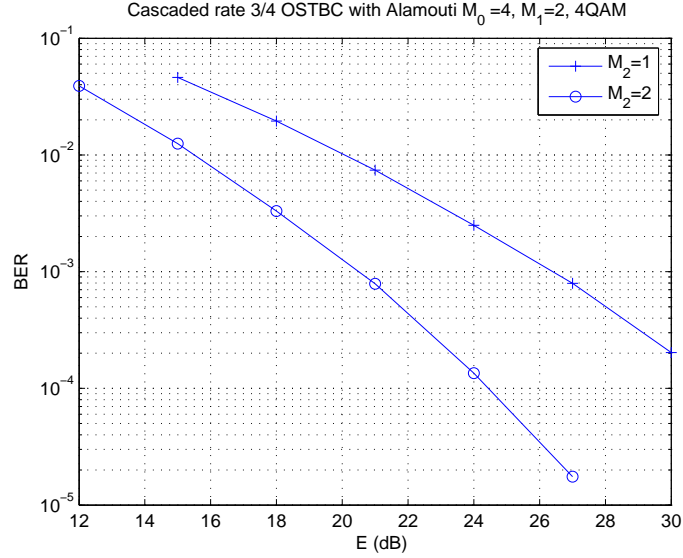}
\caption{Cascaded rate 3/4 4 antenna OSTBC with Alamouti Code for $M_0=M_1=4$ }
\label{2stage4x2}
\end{figure}

\begin{figure}
\centering
\includegraphics[height= 3in]{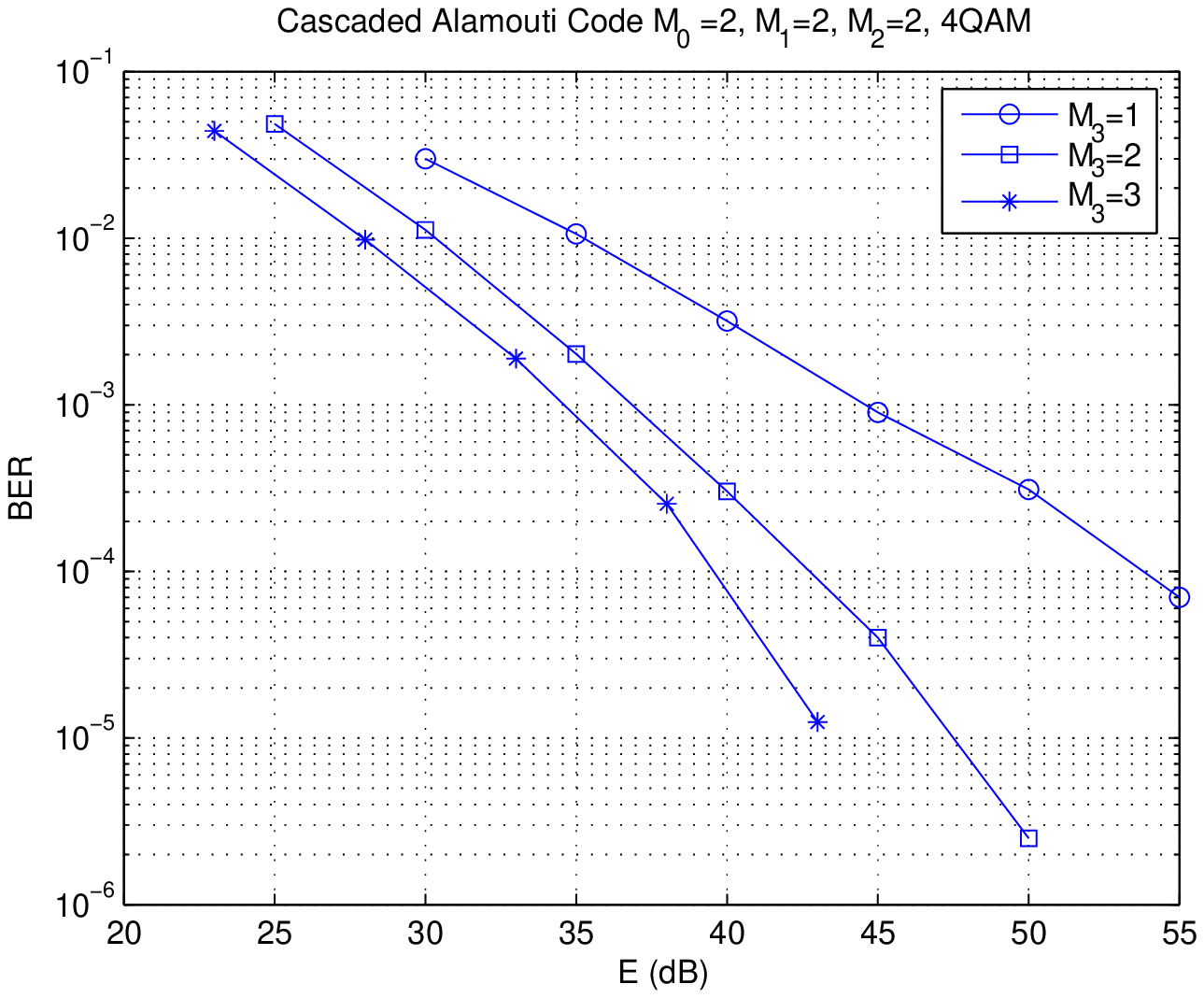}
\caption{Cascaded Alamouti Code for $N=3$-hop network}
\label{3stage}
\end{figure}

\section{Conclusion}
\label{sec:conc} In this paper we designed DSTBC's for
multi-hop wireless network and analyzed their diversity gain.
We assumed that receive CSI is
known at each relay and the destination. We proposed an AF strategy
called COSTBC to design DSTBC
using OSTBC to communicate between
adjacent relay stages when
CSI is available at each relay. We showed that the COSTBCs achieve the 
maximum diversity gain in a multi-hop wireless network. We also showed that
COSTBCs are single symbol decodable similar to OSTBC
and thus incur minimum decoding complexity.
We then gave an explicit construction of COSTBCs for various
numbers of source, destination,
and relay antennas that were shown to achieve maximum diversity gain with
minimal encoding complexity.
The only restriction that COSTBCs
impose is that the source and all the relay stages have to use an OSTBC.
 It is well known that high rate OSTBC do not exist,
therefore the COSTBCs have rate limitations. For future work it will
be interesting to see whether the OSTBC requirement can be relaxed
without sacrificing the maximum diversity gain and minimum decoding
complexity of the COSTBCs.

\appendices
\section{Single Symbol Decodable Property Of Cascaded Alamouti Code}
\label{ssdcascala}
In this section of the Appendix we show that COSTBCs (cascaded Alamouti code, Example \ref{codecascala})
have the single symbol decodable property for a $N$-hop network when $M_0=\ldots= M_{N-1}=2$.
We first establish this for $N=2$ and then generalize it to arbitrary $N$
using mathematical induction.

To construct COSTBC for $N=2, M_0=M_1=2$, let $\bS_0$ be the Alamouti code
which is given by:
\[\left[\begin{array}{cc}
s_1 & s_2 \\
-s_2^{*} & s_1^{*}
\end{array}\right]\]
where $s_1$ and $s_2$ are constituent symbols.
Then the $2\times 1$ received signal at each relay is
\[\left[\begin{array}{c}
r_m^{1} \\
r_m^{2} \end{array}\right] = \sqrt{E_0}\left[\begin{array}{cc}
s_1 & s_2 \\
-s_2^{*} & s_1^{*}
\end{array}\right]\left[\begin{array}{c}
h_{1m} \\
h_{2m} \end{array}\right] + \left[\begin{array}{c}
n_m^{1} \\
n_m^{2} \end{array}\right]\]
for $m=1,2$.
Transforming
\[\left[\begin{array}{c}
r_m^{1} \\
-r_m^{2*} \end{array}\right] =
\sqrt{E_0}\underbrace{\left[\begin{array}{cc}
h_{1m} & h_{2m} \\
-h_{2m}^{*} & h_{1m}^{*}
\end{array}\right]}_{\tilde{\bH}_m}\left[\begin{array}{c}
s_{1} \\
s_{2} \end{array}\right] + \left[\begin{array}{c}
n_m^{1} \\
-n_m^{2*} \end{array}\right]\]
for $m=1,2$.
Premultiplying by ${\tilde{\bH}^{*}_m}$
\begin{equation}
\label{relaytxala}
\left[\begin{array}{c}
\tilde{r}_m^{1} \\
\tilde{r}_m^{2*} \end{array}\right] =
\sqrt{E_0}\underbrace{\left[\begin{array}{cc}
|h_{1m}|^2+ |h_{2m}|^2 & 0 \\
0 & |h_{1m}|^2+ |h_{2m}|^2
\end{array}\right]}_{{\sfH}_m}\left[\begin{array}{c}
s_{1} \\
s_{2} \end{array}\right] + \underbrace{\left[\begin{array}{c}
n_m^{1}h_{1m}^{*} + n_m^{2*}h_{2m}\\
n_m^{1}h_{2m}^{*} - n_m^{2*}h_{1m} \end{array}\right]}_{\tilde{\bn}_m}
\end{equation}
for $m=1,2$.
Now premultiplying by $\sfH_m^{-\frac{1}{2}}$
\begin{eqnarray}
\label{relaytxala}
\left[\begin{array}{c} \nonumber
\hat{r}_m^{1} \\
\hat{r}_m^{2*} \end{array}\right] &=&
\sqrt{E_0}\underbrace{\left[\begin{array}{cc}
\sqrt{|h_{1m}|^2+ |h_{2m}|^2} & 0 \\
0 & \sqrt{|h_{1m}|^2+ |h_{2m}|^2}
\end{array}\right]}_{{\hat \bH}_m}\left[\begin{array}{c}
s_{1} \\
s_{2} \end{array}\right] \\
&&+ \underbrace{\frac{1}{\sqrt{|h_{1m}|^2+ |h_{2m}|^2}}\left[\begin{array}{c}
n_m^{1}h_{1m}^{*} + n_m^{2*}h_{2m}\\
n_m^{1}h_{2m}^{*} - n_m^{2*}h_{1m} \end{array}\right]}_{\hat{\bn}_m}
\end{eqnarray}
for $m=1,2$.
It is easy to check that the entries of vector ${\hat \bn}_m$ are
 ${\cal CN}(0,1)$ distributed and uncorrelated with each
other from which it follows that entries of vector ${\hat \bn}_m$ are independent.
Using
\[\bA_1 = \left[\begin{array}{cc}
1 & 0 \\
0 & 1 \end{array}\right], \bB_1 = {\bf 0}_2,\
 \bA_2 = {\bf 0}_2,\
\bB_2 = \left[\begin{array}{cc}
0 & -1 \\
1 & 0 \end{array}\right]\]
the $2\times 1$ transmitted signal
from relay $1$ and $2$ is given by
\begin{eqnarray}
\label{cascalatx}
\bt_1 &=& \theta_1\left[\begin{array}{c}
\sqrt{|h_{11}|^2+ |h_{21}|^2}s_1 \\
\sqrt{|h_{11}|^2+ |h_{21}|^2}s_2 \end{array}\right] +
\frac{\theta_2}{\sqrt{|h_{11}|^2+ |h_{21}|^2}}\left[\begin{array}{c}
n_1^{1}h_{11}^{*} + n_1^{2*}h_{21}\\
n_1^{1}h_{21}^{*} - n_1^{2*}h_{11}
 \end{array}\right]\nonumber\\
\bt_2 &= &\theta_1\left[\begin{array}{c}
-\sqrt{|h_{12}|^2+ |h_{22}|^2}s_2^{*} \\
\sqrt{|h_{12}|^2+ |h_{22}|^2}s_1^{*} \end{array}\right] +
\frac{\theta_2}{\sqrt{|h_{12}|^2+ |h_{22}|^2}}\left[\begin{array}{c}
-(n_2^{1}h_{22}^{*} -    n_2^{2*}h_{12})^{*}\\
(n_2^{1}h_{12}^{*} + n_2^{2*}h_{22})^{*}
 \end{array}\right],
\end{eqnarray}
where $\theta_1$ and $\theta_2$ are
the scaling factors so that the power transmitted by each relay is $E_1$,
$\bbE\bt_m^*\bt_m=E_1M_1,\ m=1,2$.
Recall from the COSTBC construction that
$\bS_1 \bydef  [\bA_1\bs+\bB_1\bs^{\dag} \ \  \bA_2\bs+\bB_2\bs^{\dag}]$, where
$\bs = [s_1,s_2,\ldots,s_L]$ is the vector of the constituent symbols of
$\bS_0$. In this case $\bs = [s_1,s_2]$ and $\bS_1$ is
\[ \left[\begin{array}{cc}
s_1 & -s_2^{*} \\
s_2 & s_1^{*}
\end{array}\right]\] which is the Alamouti code and hence an OSTBC as
required.

The $2\times 1$ received signal at the $j^{th}$
receive antenna of the destination is given by
\begin{eqnarray*}
\left[\begin{array}{c}y_j^1 \\
y_j^2\end{array}\right] & = & \theta_1\left[\begin{array}{c}
g_{1j}\sqrt{|h_{11}|^2+ |h_{21}|^2}s_1 -g_{2j}\sqrt{|h_{12}|^2+ |h_{22}|^2}s_2^{*} \\
g_{1j}\sqrt{|h_{11}|^2+ |h_{21}|^2}s_2 +g_{2j}\sqrt{|h_{12}|^2+ |h_{22}|^2}s_1^{*}
\end{array}\right] \\
 & & + \theta_2\left[\begin{array}{c}\frac{g_{1j}(n_1^{1}h_{11}^{*} + n_1^{2*}h_{21})}{\sqrt{|h_{11}|^2+ |h_{21}|^2}}  - \frac{g_{2j}(n_2^{1}h_{22}^{*} - n_2^{2*}h_{12})^{*}}{\sqrt{|h_{12}|^2+ |h_{22}|^2}} \\
\frac{g_{1j}(n_1^{1}h_{21}^{*} - n_1^{2*}h_{11})}{\sqrt{|h_{11}|^2+ |h_{21}|^2}} + \frac{g_{2j}(n_2^{1}h_{12}^{*} + n_2^{2*}h_{22})^{*}}{\sqrt{|h_{12}|^2+ |h_{22}|^2}}\end{array}\right] +
\left[\begin{array}{c}z_j^1 \\ z_j^2\end{array}\right]
\end{eqnarray*}
for $j=1,2$.
We denote
$\eta_1 = \frac{(n_1^{1}h_{11}^{*} + n_1^{2*}h_{21})}{\sqrt{|h_{11}|^2+ |h_{21}|^2}}$,
$\eta_2 = \frac{(n_1^{1}h_{21}^{*} - n_1^{2*}h_{11})}{\sqrt{|h_{11}|^2+ |h_{21}|^2}}$,
$\eta_3 = \frac{(n_2^{1}h_{22}^{*} - n_2^{2*}h_{12})}{\sqrt{|h_{12}|^2+ |h_{22}|^2}}$
$\eta_4 = \frac{(n_2^{1}h_{12}^{*} + n_2^{2*}h_{22})}{\sqrt{|h_{12}|^2+ |h_{22}|^2}}$.
Note that ${\bbE}\eta_i\eta_j^{*}= 0,
\ \forall i,\ j =1,2,3,4 \ i\ne j$.
Rewriting,
\begin{eqnarray*}
\left[\begin{array}{c}y_j^1 \\
y_j^{2*}\end{array}\right] & = & \theta_1\underbrace{\left[\begin{array}{cc}
g_{1j}\sqrt{|h_{11}|^2+ |h_{21}|^2} &  -g_{2j}\sqrt{|h_{12}|^2+ |h_{22}|^2} \\
 g_{2j}^{*}\sqrt{|h_{12}|^2+ |h_{22}|^2} & g_{1j}^{*}\sqrt{|h_{11}|^2+ |h_{21}|^2}
\end{array}\right]}_{\Phi_j}\left[\begin{array}{c}s_1 \\
s_2^{*}\end{array}\right] \\
 & & + \theta_2\left[\begin{array}{c}g_{1j}\eta_1 - g_{2j}\eta_3^{*} \\
(g_{1j}\eta_2 + g_{2j}\eta_4^{*})^{*}\end{array}\right] +
\left[\begin{array}{c}z_j^1 \\ z_j^{2*}\end{array}\right]
\end{eqnarray*}
Denoting ${\tilde h}_1 = |h_{11}|^2+ |h_{21}|^2$ and
${\tilde h}_2 = |h_{12}|^2+ |h_{22}|^2$,
premultiplying by $\Phi_j^{*}$, it follows that
\begin{eqnarray*}
\Phi_j^{*}\left[\begin{array}{c}y_j^1 \\
y_j^{2*}\end{array}\right] & = & \theta_1\left[\begin{array}{cc}
(|g_{1j}|^2{\tilde h}_1 + |g_{2j}|^2{\tilde h}_2)s_1\\
(|g_{1j}|^2{\tilde h}_1 + |g_{2j}|^2{\tilde h}_2)s_2\end{array}\right]\\
 & & + \theta_2
\underbrace{\Phi_j^{*}\left[\begin{array}{c}g_{1j}\eta_1 - g_{2j}\eta_3^{*} \\
(g_{1j}\eta_2 + g_{2j}\eta_4^{*})^{*}\end{array}\right] +
\Phi_j^{*}\left[\begin{array}{c}z_j^1 \\ z_j^{2*}\end{array}\right]
}_{\bv_j}
\end{eqnarray*}
Expanding $\bv_j$, we have

\[\bv_j = \theta_2 \left[\begin{array}{c}
|g_{1j}|^2\sqrt{{\tilde h}_1}\eta_1 - g_{1j}^{*}g_{2j}\sqrt{{\tilde h}_1}\eta_3^{*} +
 g_{1j}^{*} g_{2j} \sqrt{{\tilde h}_2}\eta_2^{*} +  |g_{2j}|^2\sqrt{{\tilde h}_2}\eta_4
 \\
- g_{1j}g_{2j}^{*}\sqrt{{\tilde h}_2}\eta_1 + |g_{2j}|^2\sqrt{{\tilde h}_2}\eta_3^{*}+
  |g_{1j}|^2\sqrt{{\tilde h}_1}\eta_2^{*} +  g_{1j}g_{2j}^{*}\sqrt{{\tilde h}_1}\eta_4
\end{array}\right] +
\left[\begin{array}{c}
g_{1j}^{*}\sqrt{{\tilde h}_{1}}z_j^1 + g_{2j}^{*}\sqrt{{\tilde h}_{2}}z_j^{2*} \\
 -g_{2j}^{*}\sqrt{{\tilde h}_{2}}z_j^1 + g_{1j}^{*}\sqrt{{\tilde h}_{1}}z_j^{2*}
\end{array}\right]  \]
%
It is easy to check that ${\bbE}\bv_{j1}\bv_{j2}^{*} = 0$,
${\bbE}\sum_{j=1}^{M_2}\bv_{j1}\left(\sum_{j=1}^{M_2}\bv_{j2}\right)^{*} =0$ and $\bv_{ji},\ i=1,2$ is circularly symmetric complex Gaussian
which implies that $\sum_{j=1}^{M_2}\bv_{j1}$ and $\sum_{j=1}^{M_2}\bv_{j2}$
are independent and thus
both $s_1$ and $s_2$ can be decoded independently of each other without any
loss in performance compared to joint decoding. Thus we conclude that
cascaded Alamouti code has the single symbol decodable property for $N=2,M_0=M_1=2$.

To extend this result to the $N$-hop case we use mathematical
induction where $M_n=2 \ \forall n=0,1\ldots,N-1$. We have shown the
result for $N=2$, thus we can start the induction. Let us assume
that the result is true for $k$-hop network. From the induction
hypothesis, the cascaded Alamouti code has the single symbol decodable property for $k$-hop
network, which means that at the $j^{th}$ receive antenna $j=1,2$ of
the destination of $k$-hop network, using CSI, the received signal
$\by_j$ can be transformed into ${\hat \by}_j$, where
\[{\hat \by}_{j} = \alpha\left[\begin{array}{c}
c_js_1 \\
c_js_2 \end{array}\right] +
\beta\left[\begin{array}{c}
z_{j1}\\
z_{j2} \end{array}\right]\]
$c_j$ is the channel gain, $\alpha$ and $\beta$
are the scaling factors and $z_{j1},z_{j2}$ are noise terms which are complex
Gaussian distributed with zero mean and $\sigma^2_k$ variance
and are independent of each other.
We extend the $k$-hop network to $k+1$ network
by assuming that the actual destination is one more hop away and using the
destination of the $k$-hop network as the $k^{th}$ relay stage with $2$
relays with a single antenna each.
Then using
\[\bA_1 = \left[\begin{array}{cc}
1 & 0 \\
0 & 1 \end{array}\right], \bB_1 = {\bf 0}_2,\
 \bA_2 = {\bf 0}_2,\
\bB_2 = \left[\begin{array}{cc}
0 & -1 \\
1 & 0 \end{array}\right]\] at the two relays of the $k^{th}$ relay stage,
the $2\times 1$ transmitted signals $\bt_1$ and $\bt_2$
from the relay $1$ and $2$ of the $k^{th}$ relay stage,
respectively,  are given by
\[\bt_1 = {\hat \alpha}\left[\begin{array}{c}
c_1s_1 \\
c_1s_2 \end{array}\right] +
{\hat \beta}\left[\begin{array}{c}
z_{11}\\
z_{12} \end{array}\right]\]
\[
\bt_2 = {\hat \alpha}\left[\begin{array}{c}
-c_2s_2^{*} \\
c_2s_1^{*} \end{array}\right] +
{\hat \beta}\left[\begin{array}{c}
-z_{21}^{*}\\
z_{22}^{*}
 \end{array}\right]
\]
where ${\hat \alpha}$ and ${\hat \beta}$ are such that $\bbE\bt_j^*\bt_j=
E_kM_k$.
Recall that these transmitted signals are similar to the transmitted
signals by cascaded Alamouti code in the $N=2$ case (\ref{cascalatx}) where
$c_1 =\sqrt{|h_{11}|^2+ |h_{21}|^2}$,
$c_2=\sqrt{|h_{12}|^2+ |h_{22}|^2}$,
$z_{11} = \frac{n_1^{1}h_{11}^{*} + n_1^{2*}h_{21}}{\sqrt{|h_{11}|^2+ |h_{21}|^2}}$,
$z_{12}=\frac{n_1^{1}h_{21}^{*} - n_1^{2*}h_{11}}{\sqrt{|h_{11}|^2+ |h_{21}|^2}}$,
$z_{21} = \frac{n_2^{1}h_{22}^{*} - n_2^{2*}h_{12}}{\sqrt{|h_{12}|^2+ |h_{22}|^2}}$,
$z_{22}  = \frac{n_2^{1}h_{12}^{*} + n_2^{2*}h_{22}}{\sqrt{|h_{12}|^2+ |h_{22}|^2}}$.
Using similar arguments as in the $N=2$ case, it easily follows that
cascaded Alamouti code has the single symbol decodable property for $k+1$-hop network from which we
can conclude that cascaded Alamouti code has the single symbol decodable property for arbitrary
$N$-hop networks with $M_0=\ldots=M_{N-1}=2$.
In the next section we show that COSTBCs have the
single symbol decodable property for an arbitrary $N$-hop network.

\section{Single Symbol Decodable Property Of COSTBC}
\label{ssdcostbc}
In this section we show that COSTBCs have the single symbol decodable property. We first show this
for $2$-hop networks and then generalize it to $N$-hop networks where $N$
is any arbitrary integer.
Let $\bS_0$ be the transmitted OSTBC from the source and
 $\bs = [s_1, \ldots, s_L]^T$ be the vector of the constituent symbols of $\bS_0$. Then from (\ref{scaledssd}), using CSI,
the received signal $\br^1_k$ at the
$k^{th}$ relay of relay stage $1$ can be transformed into ${\hat \br}^1_k$ where
\begin{equation*}
{\hat \br}_k^1 =  \sqrt{E_0}
\underbrace{\left(\begin{array}{cccc}
\sqrt{\sum_{m=1}^{M_0}|h_{mk}|^2} & 0 & 0 & 0 \\
0 & \sqrt{\sum_{m=1}^{M_0}|h_{mk}|^2}  & 0 & 0 \\
0 & 0 & \ddots & 0 \\
0 & 0 & 0&  \sqrt{\sum_{m=1}^{M_0}|h_{mk}|^2}\\
\end{array}\right)}_{\sfH^{\frac{1}{2}}}\bs + \hat{\bn}_k
\end{equation*}
and the entries of ${\hat \bn}_k$ are independent and
${\cal CN}(0,1)$ distributed.
For $N=2$, from (\ref{recdestant})
the received signal at the $j^{th}$ antenna of the destination can be
written as
\[\by_j = [\bt^1_1 \ \bt^1_2 \ \ldots \ \bt^1_{M_1}]\bg_{j} + \bz_j\] for
$j = 1,2, \ldots M_2$, where $\bt^1_k$ is the transmitted vector from relay $k$
(\ref{relaytx}) of relay stage $1$.
The received signal $\by_j$ can also be written as
\begin{eqnarray*}
\by_j &=& \sqrt{\frac{E_0E_1M}{L\gamma}}\bS_1\left[\begin{array}{c}
\sqrt{\sum_{m=1}^{M_0}|h_{m1}|^2}g_{1j} \\ \sqrt{\sum_{m=1}^{M_0}|h_{m2}|^2}g_{2j} \\ \vdots \\\sqrt{\sum_{m=1}^{M_0}|h_{mM_1}|^2}g_{M_1j}\end{array}\right] \\
&& +\underbrace{\sqrt{\frac{E_1M_1}{L\gamma}}[\bA_1\hat{n}_1 + \bB_{1}\hat{n}^{\dag}_{1} \
\bA_2\hat{n}_2 + \bB_{2}\hat{n}^{\dag}_{2} \ \ldots \
\bA_{M_1}\hat{n}_{M_1} + \bB_{M_1}\hat{n}^{\dag}_{M_1} ]\bg_j +
\bz_j}_{\bw_j}\end{eqnarray*} where
$\bS_1 = [\bA_1\bs+\bB_{1}\bs^{\dag} \ \bA_2\bs+\bB_{2}\bs^{\dag} \  \ldots \
\bA_{M_1}\bs+\bB_{M_1}\bs^{\dag} ]$.

Since $\bS_1$ is an OSTBC, invoking the single symbol decodable
property of OSTBC (\ref{ssd}) and
using the fact that entries of $\bw_j$ are independent, it follows that,
using CSI, the received signal $\by_j$ can be transformed into ${\hat \by}_j$,
where
\[{\hat \by}_j =  \sqrt{\frac{E_0E_1M_1}{L\gamma}}
\left(\begin{array}{ccc}
\sum_{k=1}^{M_1}|g_{kj}|^2\left(\sum_{m=1}^{M_0}|h_{mk}|^2\right) & 0 & 0 \\
0 & \ddots & 0 \\
0 & 0 & \sum_{k=1}^{M_1}|g_{kj}|^2\left(\sum_{m=1}^{M_0}|h_{mk}|\right)^2\\
\end{array}\right)\bs + \hat{\bw}_j
\]
and the entries of $\hat{\bw}_j$ are independent.
Thus, it is clear that all the constituent symbols $s_1, \ldots, s_L$ can be separated
with independent noise terms and  we conclude that COSTBCs have the
single symbol decodable property for a $2$-hop network.
Using mathematical induction, similar to the Appendix \ref{ssdcascala},
it can be easily shown that COSTBCs also have the single symbol decodable
property for arbitrary $N$-hop network and for brevity we omit
it here.

\bibliographystyle{IEEEtran}
\bibliography{IEEEabrv,Research}

\end{document}